\newcommand{\real}{\mathbb{R}}
\title{A Method Based on Total Variation for Network Modularity Optimization using the MBO Scheme\thanks{This work was supported by UC Lab Fees Research grant 12-LR-236660, ONR grant N000141210838, ONR grant N000141210040, AFOSR MURI grant FA9550-10-1-0569, NSF grant DMS-1109805. M.A.P. was supported by a research award (\#220020177) from the James S. McDonnell Foundation, the EPSRC (EP/J001759/1), and the FET-Proactive project PLEXMATH (FP7-ICT-2011-8; grant \#317614) funded by the European Commission.}}
\author{Huiyi Hu\thanks{Department of Mathematics, University of California, Los Angeles. Los Angeles, CA, USA. ({\tt huiyihu@math.ucla.edu, bertozzi@math.ucla.edu})} \and Thomas Laurent\thanks{Department of Mathematics, University of California, Riverside. Riverside, CA, USA. ({\tt laurent@math.ucr.edu})} \and Mason A. Porter\thanks{Oxford Centre for Industrial and Applied Mathematics, Mathematical Institute; and CABDyN Complexity Centre, University of Oxford, Oxford, UK. ({\tt porterm@maths.ox.ac.uk})} \and Andrea L. Bertozzi$^\dagger$}
\begin{document}

\maketitle


\begin{abstract}
The study of network structure is pervasive in sociology, biology, computer science, and many other disciplines. One of the most important areas of network science is the algorithmic detection of cohesive groups  of nodes called ``communities''.  One popular approach to find communities is to maximize a quality function known as {\em modularity} to achieve some sort of optimal clustering of nodes. In this paper, we interpret the modularity function from a novel perspective: we reformulate modularity optimization as a minimization problem of an energy functional that consists of a total variation term and an $\ell_2$ balance term. By employing numerical techniques from image processing and $\ell_1$ compressive sensing---such as convex splitting and the Merriman-Bence-Osher (MBO) scheme---we develop a variational algorithm for the minimization problem. We present our computational results using both synthetic benchmark networks and real data.
\end{abstract}

\begin{keywords} 
social networks, community detection, data clustering, graphs, modularity, MBO scheme. 
\end{keywords}

\begin{AMS}
62H30, 91C20, 91D30, 94C15.
\end{AMS}

\pagestyle{myheadings}
\thispagestyle{plain}
\markboth{H. Hu, T. Laurent, M. A. Porter\and A. L. Bertozzi}{A Method Based on Total Variation for Network Modularity Optimization}


\section{Introduction}

Networks provide a useful representation for the investigation of complex systems, and they have accordingly attracted considerable attention in sociology, biology, computer science, and many other disciplines \cite{NewmanPhys08,newman2010}.  Most of the networks that people study are graphs, which consist of nodes (i.e., vertices) to represent the elementary units of a system and edges to represent pairwise connections or interactions between the nodes.  

Using networks makes it possible to examine intermediate-scale structure in complex systems.  
Most investigations of intermediate-scale structures have focused on {\em community structure}, in which one decomposes a network into (possibly overlapping) cohesive groups of nodes called {\em communities} \cite{MasonNotice}.\footnote{Other important intermediate-scale structures to investigate include core-periphery structure \cite{puck} and block models \cite{doreian}.} There is a higher density of connections within communities than between them.

In some applications, communities have been related to functional units in networks \cite{MasonNotice}.  For example, a community might be closely related to a functional module in a biological system \cite{anna} or a group of friends in a social system \cite{MasonFacebook}. Because community structure in real networks can be very insightful \cite{MasonNotice, GN02,Fortunato09,newman2010}, it is useful to study algorithmic methods to detect communities.   Such efforts have been useful in studies of the social organization in friendship networks \cite{MasonFacebook}, legislation cosponsorships in the United States Congress \cite{MasonCongress}, functional modules in biology networks \cite{GuimeraBiology,anna}, and many other situations.

To perform community detection, one needs a quantitative definition for what constitutes a community, though this relies on the goal and application one has in mind. Perhaps the most popular approach is to optimize a quality function known as {\em modularity} \cite{newman03,GN04,Newmanspectral}, and numerous computational heuristics have been developed for optimizing modularity \cite{MasonNotice,Fortunato09}. The modularity of a network partition measures the fraction of total edge weight within communities versus what one might expect if edges were placed randomly according to some null model.  We give a precise definition of modularity in equation (\ref{modularity}) in Section \ref{modreview}. Modularity gives one definition of the ``quality'' of a partition, and maximizing modularity is supposed to yield a reasonable partitioning of a network into disjoint communities. 

Community detection is related to 
{\em graph partitioning}, which has been applied to problems in numerous areas (such as data clustering) \cite{tutorialSC, ShiMalik,WeissNg}. 
In graph partitioning, a network is divided into disjoint sets of nodes. Graph partitioning usually requires the number of clusters to be specified to avoid trivial solutions, whereas modularity optimization does not require one to specify the number of clusters \cite{MasonNotice}. This is a desirable feature for applications such as social and biological networks.

Because modularity optimization is an NP-hard problem \cite{NPhard}, efficient algorithms are necessary to find good locally optimal network partitions with reasonable computational costs.  Numerous methods have been proposed \cite{MasonNotice,Fortunato09}.  These include greedy algorithms \cite{newman04,Clauset04}, extremal optimization \cite{EO1,EO2}, simulated annealing \cite{SA1,SA2}, spectral methods (which use eigenvectors of a modularity matrix) \cite{Newmanspectral,tripart}, and more.  The locally greedy algorithm by Blondel et al. \cite{Blondel} is arguably the most popular computational heuristic; it is a very fast algorithm, and it also yields high modularity values \cite{Fortunato09, LFRcomp}.

In this paper, we interpret modularity optimization (using the Newman-Girvan null model \cite{GN04,newman2010}) from a novel perspective. Inspired by the connection between graph cuts and the total variation (TV) of a graph partition, we reformulate the problem of modularity optimization as a minimization of an energy functional that consists of a graph cut (i.e., TV)  term and an $\ell_2$ balance term. By employing numerical techniques from image processing and $\ell_1$ compressive sensing---such as convex splitting and the Merriman-Bence-Osher (MBO) scheme \cite{MBO}---we propose a variational algorithm to perform the minimization on the new formula. We apply this method to both synthetic benchmark networks and real data sets, and we achieve performance that is competitive with the state-of-the-art modularity optimization algorithms. 


The rest of this paper is organized as follows. In Section \ref{meth}, we review the definition of the modularity function, and we then derive an equivalent formula of modularity optimization as a minimization problem of an energy functional that consists of a total variation term and an $\ell_2$ balance term. In Section \ref{section algorithm}, we explain the MBO scheme and convex splitting, which are numerical schemes that we employ to solve the minimization problem that we proposed in Section \ref{meth}. In Section \ref{numres}, we test our algorithms on several benchmark and real-world networks. 
We then review the similarity measure known as the
{\em normalized mutual information} (NMI) and use it to compare network partitions with ground-truth partitions.  We also evaluate the speed of our method, which we compare to classic spectral clustering \cite{ShiMalik, tutorialSC}, modularity-based spectral partitioning \cite{Newmanspectral, tripart}, and the GenLouvain code \cite{Netwiki} (which is an implementation of a Louvain-like algorithm \cite{Blondel}). In Section \ref{conc}, we summarize and discuss our results.


\section{Method}\label{meth}

Consider an $N$-node network, which we can represent as a weighted graph $(G, E)$ with a node set $G = \{n_1,n_2,\ldots,n_N\}$ and an edge set $E = \{w_{ij}\}_{i,j=1}^N$. The quantity $w_{ij}$ indicates the closeness (or similarity) of the tie between nodes $n_i$ and $n_j$, and the array of all $w_{ij}$ values forms the graph's \emph{adjacency matrix} $\textbf{W}=[w_{ij}]$. In this work, we only consider undirected networks, so $w_{ij}=w_{ji}$. 


\subsection{Review of the Modularity Function} \label{modreview}
The modularity of a graph partition measures the fraction of total edge weight within each community minus the edge weight that would be expected if edges were placed randomly using some null model \cite{MasonNotice}. The most common null model is the Newman-Girvan (NG) model \cite{GN04}, which assigns the expected edge weight between $n_i$ and $n_j$ to be $\frac{k_i k_j}{2m}$, where $k_i=\sum_{s=1}^N w_{is}$ is the strength (i.e., weighted degree) of $n_i$ and $2m=\sum_{i=1}^N k_i$ the total volume (i.e., total edge weight) of the graph $({G,E})$.  When a network is unweighted, then $k_i$ is the degree of node $i$. An advantage of the NG null model is that it preserves the expected strength distribution of the network.  

A \emph{partition}  $g=\{g_i\}_{i=1}^N$ of the graph $(G,E)$ consists of a set of disjoint subsets of the node set $G$ whose union is the entire set $G$.  The quantity $g_i \in\{1,2,\ldots,\hat{n}\}$ is the community assignment of $n_i$, where there are $\hat{n}$ communities ($\hat{n}\leq N$). The \emph{modularity} of the partition $g$ is defined as
\begin{equation}
	Q(g) =  \frac{1}{2m}\sum_{i,j=1}^N \left(w_{ij}-\gamma \frac{k_i k_j}{2m}\right)\delta(g_i,g_j)\,, \label{modularity}
\end{equation}
where $\gamma$ is a resolution parameter \cite{resolution}. The term $\delta(g_i,g_j)=1$ if $g_i=g_j$ and $\delta(g_i,g_j)=0$ otherwise. The resolution parameter can change the scale at which a network is clustered \cite{MasonNotice,Fortunato09}. A network breaks into more communities as one increases $\gamma$. 

By maximizing modularity, one expects to obtain a reasonable partioning of a network. However, this maximization problem is NP hard \cite{NPhard}, so considerable effort has been put into the development of computational heuristics to obtain network partitions with high values of $Q$.



\subsection{Reformulation of Modularity Optimization}

In this subsection, we reformulate the problem of modularity optimization by
deriving a new expression for $Q$ that bridges the network-science and compressive-sensing communities.  This formula makes it possible to use techniques from the latter to tackle the modularity-optimization problem with low computational cost.

We start by defining the {\em total variation} (TV), weighted $\ell_2$-norm, and weighted mean of a function $f: G \rightarrow \real$:
\begin{align}
	|f|_{TV} &:=   \frac{1}{2} \sum_{i,j=1}^Nw_{ij} \left|f_i-f_j\right|\,, \notag \\
	\quad \|f\|_{\ell_2}^2 &:=    \sum_{i=1}^N  k_i  \left|f_i\right|^2\,,  \notag \\
	\mathrm{mean}(f) &:= \frac{1}{2m}\sum_{i=1}^N k_i f_i \, ,\label{defTV}
\end{align}
where $f_i=f(n_i)$. The quantity $\frac{1}{2} \sum_{i,j=1}^Nw_{ij} |f_i-f_j| $ is called the total variation because it enjoys many properties of the classical total variation $\int |\nabla f|$ of a function $f: \real^n \to \real$ \cite{fan}.
For a vector-valued function  $f=(f^{(1)}, \ldots, f^{(\hat n)})$: $G\rightarrow \real^{\hat{n}}$, we define 
\begin{align}
	|f|_{TV} &:=  \sum_{l=1}^{\hat n} |f^{(l)}|_{TV}\,, \notag \\
	\|f\|_{\ell_2}^2 &:=    \sum_{l=1}^{\hat n}  \|f^{(l)}\|_{\ell_2}^2\,,   
  \label{defTV2}
\end{align}
and $\mathrm{mean}(f):= \left(\mathrm{mean}(f^{(1)}), \ldots, \mathrm{mean}(f^{(\hat n)})\right)$.


Given the partition $g=\{g_i\}_{i=1}^N$ defined in Section \ref{modreview}, let $A_l = \{n_i\in G, g_i = l\}$, where $l\in\{1,2,\ldots,\hat{n}\}$ ($\hat{n} \leq N$). Thus, $G=\cup_{l=1}^{\hat{n}}A_l$ is a partition of the network $(G,E)$ into disjoint communities. Note that every $A_l$ is allowed to be empty, so $g$ is a partition into {\it at most} $\hat n$ communities.   
 Let $f^{(l)}: G \to \{0,1\}$ be the indicator function of community $l$; in other words, $f^{(l)}_i$ equals one if $g_i=l$, and it equals zero otherwise.  The function $f=(f^{(1)}, \ldots, f^{(\hat{n})})$ is then called the \emph{partition function} (associated with $g$).
 Because each set $A_l$ is disjoint from all of the others, it is guaranteed that only a single entry of $f_i$ equals one for any node $i$. Therefore, $f: G \to V^{\hat{n}} \subset \real^{\hat n}$, where  $V^{\hat{n}}$
\begin{equation*} 
 	V^{\hat{n}}:=\{(1,0,\ldots,0),(0,1,0,\ldots,0),\ldots,(0,\ldots,0,1)\}=\{\vec{e}_l\}_{l=1}^{\hat{n}}  
\end{equation*}
is the standard basis of $\mathbb{R}^{\hat{n}}$. 

The key observation that bridges the network-science and compressive-sensing communities is the following: 

\

\begin{theorem} Maximizing the modularity functional $Q$ over all partitions that have at most $\hat{n}$ communities is equivalent to minimizing
\begin{align}
	|f|_{TV}-\gamma \|f-\mathrm{mean}(f)\|^2_{\ell_2}  \label{minimization}
\end{align}
over all  functions $f: G \to V^{\hat{n}}$.
\end{theorem}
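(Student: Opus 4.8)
The plan is to exploit the one-to-one correspondence between partitions $g$ into at most $\hat{n}$ communities and partition functions $f: G \to V^{\hat{n}}$: each such $f$ is the vector of indicator functions of a unique partition, and conversely. Under this correspondence I will show that the energy in (\ref{minimization}) is an affine function of $Q(g)$ with a \emph{negative} leading coefficient, i.e. that there are constants $a<0$ and $b$, both independent of the partition, with $|f|_{TV}-\gamma\|f-\mathrm{mean}(f)\|^2_{\ell_2}=a\,Q(g)+b$. Once this identity is in hand, minimizing the left-hand side over all $f: G \to V^{\hat{n}}$ is exactly the same as maximizing $Q$ over all admissible partitions (relabeling of communities leaves both quantities invariant), which is the claim. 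Thus the whole argument reduces to computing each of the two terms in terms of the Kronecker deltas $\delta(g_i,g_j)$.

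First I would rewrite the TV term. Writing $f^{(l)}_i=\delta(g_i,l)$, the crucial observation is that for a fixed pair $(i,j)$ the inner sum collapses: $\sum_{l=1}^{\hat n}|f^{(l)}_i-f^{(l)}_j|=2\bigl(1-\delta(g_i,g_j)\bigr)$, since the summand is nonzero only for $l=g_i$ and $l=g_j$, and only when $g_i\neq g_j$. Substituting this into the definition of $|f|_{TV}$ and using $\sum_{i,j}w_{ij}=\sum_i k_i=2m$, I obtain $|f|_{TV}=2m-\sum_{i,j}w_{ij}\,\delta(g_i,g_j)$, i.e. the total weight minus the intra-community weight, consistent with the graph-cut interpretation of TV.

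Next I would handle the balance term. Setting $K_l:=\sum_{i\in A_l}k_i$ for the total strength of community $l$, a weighted-variance identity gives, for each component, $\|f^{(l)}-\mathrm{mean}(f^{(l)})\|^2_{\ell_2}=\sum_i k_i (f^{(l)}_i)^2-2m\,\mathrm{mean}(f^{(l)})^2$. Here I use idempotence $(f^{(l)}_i)^2=f^{(l)}_i$ to get $\sum_i k_i (f^{(l)}_i)^2=K_l$, together with $\mathrm{mean}(f^{(l)})=K_l/(2m)$. Summing over $l$, the linear part yields $\sum_l K_l=2m$, while expanding the squares as a double sum gives $\sum_l K_l^2=\sum_{i,j}k_i k_j\,\delta(g_i,g_j)$; hence $\|f-\mathrm{mean}(f)\|^2_{\ell_2}=2m-\frac{1}{2m}\sum_{i,j}k_i k_j\,\delta(g_i,g_j)$.

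Finally I would combine the two expressions. Subtracting $\gamma$ times the balance term from the TV term, the constant pieces collect into $2m(1-\gamma)$ and the $\delta$-weighted sums recombine into exactly $-\sum_{i,j}\bigl(w_{ij}-\gamma\frac{k_i k_j}{2m}\bigr)\delta(g_i,g_j)=-2m\,Q(g)$. Thus $|f|_{TV}-\gamma\|f-\mathrm{mean}(f)\|^2_{\ell_2}=2m(1-\gamma)-2m\,Q(g)$, the desired affine relation with $a=-2m<0$ and $b=2m(1-\gamma)$, which completes the proof. The individual steps are elementary; the only real obstacle is the bookkeeping in the balance term—carrying out the weighted-variance expansion correctly and recognizing that the quadratic double sum $\sum_l K_l^2$ reproduces \emph{precisely} the Newman-Girvan null-model term $\sum_{i,j}k_i k_j\,\delta(g_i,g_j)$, so that the TV and balance contributions fuse into a single copy of $Q$ rather than leaving a residual term.
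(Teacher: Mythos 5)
Your proof is correct and takes essentially the same route as the paper's: both establish, by direct computation, the affine identity $|f|_{TV}-\gamma\|f-\mathrm{mean}(f)\|^2_{\ell_2}=2m(1-\gamma)-2m\,Q(g)$ with negative slope, so minimizing the energy is maximizing $Q$. Your Kronecker-delta bookkeeping is a notational variant of the paper's community-by-community cut/volume identities ($|\chi_{A_l}|_{TV}=\mathrm{Cut}(A_l,A_l^c)$ and $\|\chi_{A_l}-\mathrm{mean}(\chi_{A_l})\|^2_{\ell_2}=\mathrm{vol}(A_l)\,\mathrm{vol}(A_l^c)/(2m)$, which is exactly your $K_l-K_l^2/(2m)$), so the mathematical content is identical.
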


\

\begin{proof}
In the language of graph partitioning, $\mathrm{vol}(A_l) :=\sum_{n_i\in A_l}k_i$ denotes the volume of the set $A_l$, and $\mathrm{Cut}(A_l,A_l^c):=\sum_{n_i\in A_l,n_j\in A_l^c}w_{ij}$ is the graph cut of $A_l$ and $A_l^c$. Therefore,
\begin{align}
	 Q(g) &=\frac{1}{2m}\left[ \big(2m-  \sum_{g_i\neq g_j} w_{ij}\big)-\frac{\gamma}{2m}\sum_{l=1}^{\hat{n}}\left(\sum_{n_i\in A_l, n_j\in A_l}k_i k_j \right) \right] \nonumber  \\
	&=1-\frac{1}{2m}\left( \sum_{l=1}^{\hat{n}} \mathrm{Cut}(A_l,A_l^c) +  \frac{\gamma}{2m}\sum_{l=1}^{\hat{n}}\mathrm{vol}A_l^2\right) \nonumber \\
	&=1-\gamma -\frac{1}{2m}\left(\sum_{l=1}^{\hat{n}} \mathrm{Cut}(A_l,A_l^c) -\frac{\gamma}{2m} \Big(\sum_{l=1}^{\hat{n}}\mathrm{vol}A_l \cdot \mathrm{vol}A_l^c \Big)\right)\,,\label{rewrite}
\end{align}
where the sum $\sum_{g_i\neq g_j} w_{ij}$ includes both $w_{ij}$ and $w_{ji}$. 
Note that if  $\chi_A: G \to \{0,1\}$ is the  indicator function of a subset $A\subset G$, then $\left|\chi_A\right|_{TV}=\mathrm{Cut}\left(A,A^c\right)$ and 
\begin{align*}
	\left\|\chi_A-\mathrm{mean}(\chi_A)\right\|^2_{\ell_2}&=\sum_{i=1}^Nk_i\left|\chi_A(n_i)-\frac{\mathrm{vol}(A)}{2m}\right|^2\nonumber\\ 
	&= \mathrm{vol}(A) \left(1-\frac{\mathrm{vol}(A)}{2m}\right)^2 + \mathrm{vol}\left(A^c\right) \left(\frac{\mathrm{vol}\left(A\right)}{2m}\right)^2 \notag \\
	&=\frac{\mathrm{vol}(A) \cdot \mathrm{vol}\left(A^c\right)}{2m}\,.
\end{align*}
Because  $f^{(l)}=\chi_{A_l}$ is the indicator function of $A_l$, it follows that
\begin{align}
	\left|f\right|_{TV}-\gamma \|f-\mathrm{mean}(f)\|^2_{\ell_2} &=\sum_{l=1}^{\hat{n}} \left\{ |f^{(l)}|_{TV} -\gamma \|f^{(l)}-\mathrm{mean}(f^{(l)})\|^2_{\ell_2}\right\}  \nonumber \\
	&=\sum_{l=1}^{\hat{n}}\left\{ \mathrm{Cut}(A_l,A_l^c)-\gamma \frac{\mathrm{vol}(A_l) \cdot \mathrm{vol}(A_l^c)}{2m}\right\}\,. \label{L2}
\end{align}

Combining \eqref{rewrite} and \eqref{L2}, we conclude that maximizing $Q$ is equivalent to minimizing \eqref{minimization}. 
 \end{proof}

\

With the above argument, we have reformulated the problem of modularity maximization as the minimization problem (\ref{minimization}), which corresponds to minimizing the total variation (TV) of the function $f$ along with a balance term.  This yields a novel view of modularity optimization that uses the perspective of compressive sensing (see the references in \cite{CS-MRI}). In the context of compressive sensing, one seeks a solution of function $f$ that is compressible under the transform of a linear operator $\Phi$. That is, we want $\Phi f$ to be well-approximated by sparse functions. (A function is considered to be ``sparse'' when it is equal to or approximately equal to zero on a ``large" portion of the whole domain.) Minimizing $\|\Phi f\|_{\ell^1}$ promotes sparsity in $\Phi f$. When $\Phi$ is the gradient operator (on a continuous domain) or the finite-differencing operator (on a discrete domain) $\nabla$, then the object $\|\Phi f\|_{\ell^1}=\|\nabla f\|_{\ell^1}$ becomes the total variation $|f|_{TV}$ \cite{Image-via-TV, CS-MRI}. The minimization of TV is also common in image processing and computer vision \cite{ROF, Image-via-TV, CS-MRI,ChanVese}. 
 

 
The expression in equation (\ref{rewrite}) is interesting because its geometric interpretation of modularity optimization contrasts with existing interpretations (e.g., probabilistic ones or in terms of the Potts model from statistical physics \cite{Newmanspectral,MasonNotice}). For example, we see from \eqref{rewrite} that finding the  bipartition of the graph $G=A\cup A^c$ with maximal modularity is equivalent to minimizing 
\begin{equation*}
 	\mathrm{Cut}(A,A^c)-\frac{\gamma}{2m} \mathrm{vol}(A) \cdot \mathrm{vol}\left(A^c\right)\,.
\end{equation*}	
Note that the term  $\mathrm{vol}(A) \cdot \mathrm{vol}\left(A^c\right)$ is maximal when $\mathrm{vol}(A) = \mathrm{vol}\left(A^c\right)=m$. Therefore, the second term is a {\it balance term} that favors a partition of the graph into two groups of roughly equal size. In contrast, the first term favors a partition of the graph in which few links are severed.  This is reminiscent of the {\em Balance Cut} problem in which the objective is to minimize the ratio 
 \begin{equation}
 	\frac{\mathrm{Cut}\left(A,A^c\right)}{\mathrm{vol}(A) \cdot \mathrm{vol}\left(A^c\right)}\,. \label{balancecut}
\end{equation}	
 In recent papers Refs.~\cite{SB09,pro:HeinBuhler10OneSpec, pro:HeinSetzer11TightCheeger,
art:BressonTaiChanSzlam12TransLearn, pro:Rang-Hein-constrained,BLUV12 }, various TV-based algorithms were proposed to minimize ratios similar to \eqref{balancecut}.


\section{Algorithm}\label{section algorithm}
Directly optimizing \eqref{minimization} over all partition functions $f: G \to V^{\hat{n}}$
 is difficult due to the discrete solution space. Continuous relaxation is thus needed to simplify the optimization problem.


\subsection{Ginzburg-Landau Relaxation of the Discrete Problem}

Let $X^p$ 
\begin{equation*}
	X^p=\{f \; | \;f:G \to V^{\hat{n}}  \}
\end{equation*}
denote the  space of  partition functions. Minimizing \eqref{minimization} over  $X^p$ is equivalent  to minimizing 
\begin{equation}
	H(f)=\begin{cases}
|f|_{TV}-\gamma \|f-\mathrm{mean}(f)\|^2_{\ell_2}\,, & \text{if } f \in X^p \\
+\infty\,, & \text{otherwise}
\end{cases}
\end{equation}
over all $f: G \to \real^{\hat n}$.  

The Ginzburg-Landau (GL) functional has been used as an alternative for the TV term in image processing (see the references in Ref.~\cite{flenner}) due to its $\Gamma$-convergence to the TV of the characteristic functions in Euclidean space \cite{GL-TV}. Reference~\cite{flenner} developed a graph version of the GL functional and used it for graph-based high-dimensional data segmentation problems. The authors of Ref.~\cite{MBO-multiGL} generalized the two-phase graphical GL functional to a multi-phase one. 

For a graph $(G,E)$, the (combinatorial) {\em graph Laplacian} \cite{fan} is defined as
\begin{equation}	
	\textbf{L} = \textbf{D}-\textbf{W}\,,
\end{equation}
where $\textbf{D}$ is a diagonal matrix with nodes of strength $\{k_i\}_{i=1}^N$ on the diagonal and $\textbf{W}$ is the weighted adjacency matrix. The operator $\textbf{L}$ is linear on $\{z | z:G\to \real\}$, and satisfies:
\begin{equation*}	 
	 \langle z,\textbf{L}z \rangle = \frac{1}{2}\sum_{i,j} w_{ij}(z_i-z_j)^2\,,
\end{equation*}	 
where $z_i=z(n_i)$ and $i\in\{1,2,\ldots,N\}$. 

Following the idea in Refs.~\cite{flenner, MBO-multiGL}, we define the {\em Ginzburg-Landau relaxation} of $H$ as follows:
\begin{align}
	H_\epsilon(f)=\frac{1}{2}\sum_{l=1}^{\hat{n}}\langle f^{(l)},\textbf{L}f^{(l)} \rangle +\frac{1}{\epsilon^2}\sum_{i=1}^N W_{\mathrm{multi}}(f_i)-\gamma \|f-\mathrm{mean}(f)\|^2_{\ell_2}\, ,
\label{relaxation}
\end{align}
where $\epsilon >0$.
In equation (\ref{relaxation}), $W_{\mathrm{multi}}: \real^{\hat{n}} \to \real$ is a multi-well potential (see Ref. \cite{MBO-multiGL}) with equal-depth wells. The minima of $W_{\mathrm{multi}}$ are spaced equidistantly, take the value $0$, and correspond to the points of $V^{\hat{n}}$. The specific formula for $W_{\mathrm{multi}}$ does not matter for the present paper, because we will discard it when we implement the MBO scheme. Note that the purpose of this multi-well term is to force $f_i$ to go to one of the minima, so that one obtains an approximate phase separation.

Our next theorem states that modularity optimization with an upper bound on the number of communities is well-approximated (in terms of $\Gamma$-convergence) by minimizing $H_\epsilon$ over all $f: G \to \real^{\hat{n}}$.  Therefore, the {\em discrete} modularity optimization problem \eqref{minimization} can be approximated by a {\em continuous} optimization problem. We give the mathematical definition and relevant proofs of $\Gamma$-convergence in the Appendix.

\

\begin{theorem}[$\Gamma$--convergence of $H_\epsilon$ towards $H$]\label{theorem relaxation}
The functional $H_\epsilon$ $\Gamma$-converges to $H$ on the space $X=\{f \; | \;f:G \to \real^{\hat{n}}  \}$.
\end{theorem}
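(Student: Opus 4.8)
The plan is to verify the two defining conditions of $\Gamma$-convergence (the liminf inequality and the existence of a recovery sequence) directly from the definition recalled in the Appendix. The decisive simplification is that the underlying space $X=\real^{N\hat n}$ is finite-dimensional with its usual topology, so I may work with ordinary convergent sequences and need no abstract compactness. I first peel off the balance term: writing
\begin{equation*}
 \tilde H_\epsilon(f)=\tfrac12\sum_{l=1}^{\hat n}\langle f^{(l)},\textbf{L}f^{(l)}\rangle+\frac{1}{\epsilon^2}\sum_{i=1}^N W_{\mathrm{multi}}(f_i)\,,\qquad B(f)=-\gamma\|f-\mathrm{mean}(f)\|_{\ell_2}^2\,,
\end{equation*}
we have $H_\epsilon=\tilde H_\epsilon+B$ and $H=\tilde H+B$, where $\tilde H$ equals the Dirichlet energy on $X^p$ and $+\infty$ elsewhere. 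Since $\mathrm{mean}$ is linear in $f$, the map $B$ is a continuous (indeed quadratic) function of the finitely many entries of $f$, and $\Gamma$-convergence is stable under the addition of a continuous functional; hence it suffices to prove that $\tilde H_\epsilon$ $\Gamma$-converges to $\tilde H$ and then add $B$ back.

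For the recovery (limsup) inequality, if $f\in X^p$ then every $f_i$ sits at a well of $W_{\mathrm{multi}}$, so $W_{\mathrm{multi}}(f_i)=0$ and the constant sequence $f_\epsilon\equiv f$ gives $\tilde H_\epsilon(f_\epsilon)=\tfrac12\sum_l\langle f^{(l)},\textbf{L}f^{(l)}\rangle=\tilde H(f)$ for every $\epsilon$; if $f\notin X^p$ then $\tilde H(f)=+\infty$ and there is nothing to prove. For the liminf inequality, take any $f_\epsilon\to f$. If $f\notin X^p$, some $f_i\notin V^{\hat n}$, so $W_{\mathrm{multi}}(f_i)>0$; by continuity $W_{\mathrm{multi}}((f_\epsilon)_i)$ stays bounded away from $0$, whence $\epsilon^{-2}\sum_iW_{\mathrm{multi}}((f_\epsilon)_i)\to+\infty$ and $\liminf\tilde H_\epsilon(f_\epsilon)=+\infty=\tilde H(f)$. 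If $f\in X^p$, then discarding the nonnegative penalty and using continuity of the Dirichlet form yields $\liminf\tilde H_\epsilon(f_\epsilon)\ge \tfrac12\sum_l\langle f^{(l)},\textbf{L}f^{(l)}\rangle=\tilde H(f)$. The identification of this limit on $X^p$ with the total variation rests on the same computation that proves the reformulation theorem: for an indicator $\chi_A$ one has $(\chi_A(n_i)-\chi_A(n_j))^2=|\chi_A(n_i)-\chi_A(n_j)|$, so the continuous, $\epsilon$-independent Dirichlet energy reduces on $X^p$ to the graph total variation $|f|_{TV}$, up to the normalization of $\textbf{L}$.

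The point I expect to require the most care is the liminf inequality off $X^p$, i.e. establishing that the penalty genuinely diverges. This uses two facts that must be stated cleanly: that $W_{\mathrm{multi}}\ge 0$ vanishes exactly on $V^{\hat n}$ and is continuous, so its zero set (which forces $f\in X^p$) is closed and $\epsilon^{-2}W_{\mathrm{multi}}$ acts as a hard constraint in the limit; and that the Dirichlet term, carrying no power of $\epsilon$, exerts no competing pressure. This last feature is what makes the present argument markedly simpler than the classical Modica--Mortola / Ginzburg--Landau $\Gamma$-convergence: because there is no $\epsilon$-weight balancing the well potential, the diffuse interface collapses with no residual surface-tension constant, and the recovery sequence may be taken constant rather than built from an optimal transition profile. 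The only other delicacy is that $B$ is not bounded below on the unbounded space $X$, which is precisely why it is cleaner to remove it as a continuous perturbation at the outset rather than to carry it through both inequalities.
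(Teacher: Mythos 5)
Your proposal is correct and follows essentially the same route as the paper: you peel off the balance term $-\gamma\|f-\mathrm{mean}(f)\|_{\ell_2}^2$ as a continuous perturbation (exactly the argument in the paper's proof of this theorem) and then establish $\Gamma$-convergence of the Ginzburg--Landau part to the TV functional via constant recovery sequences and divergence of the $\epsilon^{-2}$ well-potential off $X^p$, which is precisely the content of the paper's Appendix theorem. The only cosmetic difference is that you verify the liminf/limsup conditions for the Dirichlet-plus-penalty functional jointly, whereas the paper first proves $\Gamma$-convergence of the penalty alone and then adds the continuous Dirichlet term; the mechanics are identical.
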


\

\begin{proof}
As shown in Theorem \ref{gamma convergence} (in the Appendix), $H_\epsilon + \gamma \|f-\mathrm{mean}(f)\|^2_{\ell_2}$ $\Gamma$-converges to $H+\gamma \|f-\mathrm{mean}(f)\|^2_{\ell_2}$ on $X$. Because $\gamma \|f-\mathrm{mean}(f)\|^2_{\ell_2}$
is continuous on the metric space $X$, it is straightforward to check that $H_\epsilon$ $\Gamma$-converges to $H$ according to the definition of $\Gamma$-convergence.
\end{proof}

\

By definition of $\Gamma$-convergence, Theorem \ref{theorem relaxation} directly implies the following:

\

\begin{corollary}
Let $f^\epsilon$ be the global minimizer of $H_\epsilon$. Any convergent subsequence of $f_{\epsilon}$ then converges to a global maximizer of the modularity $Q$ with at most $\hat{n}$ communities.  
\end{corollary}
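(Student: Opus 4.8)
The plan is to deduce the corollary directly from the fundamental theorem of $\Gamma$-convergence, combined with the equivalence established in the first theorem of this section. Theorem \ref{theorem relaxation} already tells us that $H_\epsilon$ $\Gamma$-converges to $H$ on the metric space $X$, and the general theory of $\Gamma$-convergence guarantees that limit points of minimizers of a $\Gamma$-converging family are minimizers of the limit functional. Thus, once we are handed a convergent subsequence of the global minimizers $f^\epsilon$ with limit $f^*$, the entire task reduces to verifying that $f^*$ is a global minimizer of $H$ and then translating this fact back into a statement about the modularity $Q$.

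First I would fix a convergent subsequence $f^{\epsilon_k}\to f^*$ in $X$. The $\liminf$ inequality in the definition of $\Gamma$-convergence yields $H(f^*)\le \liminf_{k\to\infty} H_{\epsilon_k}(f^{\epsilon_k})$. Next, for an arbitrary competitor $g\in X$, I would invoke the recovery-sequence ($\limsup$) part of $\Gamma$-convergence to produce a sequence $g_k\to g$ satisfying $\limsup_{k\to\infty} H_{\epsilon_k}(g_k)\le H(g)$. Because each $f^{\epsilon_k}$ is a global minimizer of $H_{\epsilon_k}$, we have $H_{\epsilon_k}(f^{\epsilon_k})\le H_{\epsilon_k}(g_k)$. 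Chaining these three relations gives $H(f^*)\le \liminf_k H_{\epsilon_k}(f^{\epsilon_k})\le \limsup_k H_{\epsilon_k}(g_k)\le H(g)$. Since $g\in X$ was arbitrary, $f^*$ is a global minimizer of $H$ over $X$.

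The final step is to identify what this minimizer represents, and this is where the one point requiring care arises: I must confirm that $f^*$ actually lies in the space $X^p$ of partition functions, rather than in the region where $H=+\infty$. This is immediate once I note that $X^p$ is nonempty and that $H$ is finite on $X^p$, where it coincides with $|f|_{TV}-\gamma\|f-\mathrm{mean}(f)\|^2_{\ell_2}$; hence the global minimum of $H$ is finite, and a global minimizer cannot take the value $+\infty$, which forces $f^*\in X^p$. Having placed $f^*$ in $X^p$, the equivalence proved in the first theorem---minimizing $|f|_{TV}-\gamma\|f-\mathrm{mean}(f)\|^2_{\ell_2}$ over partition functions $f:G\to V^{\hat n}$ is the same as maximizing $Q$ over partitions into at most $\hat n$ communities---shows that $f^*$ is the partition function of a global maximizer of $Q$. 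I expect the only genuine obstacle to be this last bookkeeping, namely checking that the abstract $\Gamma$-limit minimizer is an admissible partition function so that the dictionary of the first theorem applies to it; the convergence of values is supplied by the fundamental theorem of $\Gamma$-convergence, and existence of convergent subsequences need not be addressed, since the statement quantifies only over those subsequences that already converge.
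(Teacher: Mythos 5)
Your proposal is correct and takes essentially the same route as the paper: the paper simply notes that the corollary follows ``by definition of $\Gamma$-convergence'' from Theorem \ref{theorem relaxation}, i.e., it invokes the fundamental theorem of $\Gamma$-convergence (limit points of global minimizers are global minimizers of the $\Gamma$-limit), which is precisely the standard chain of inequalities you write out, followed by the translation back to modularity via the equivalence theorem. Your additional bookkeeping---that the minimizer of $H$ must lie in $X^p$ because $H$ is finite on the nonempty set $X^p$ and $+\infty$ elsewhere---is exactly the observation needed to make the paper's implicit identification rigorous.
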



\subsection{MBO Scheme, Convex splitting, and Spectral Approximation} 

In this subsection, we use techniques from the compressive-sensing and image-processing literatures to develop an efficient algorithm that (approximately) optimizes $H_\epsilon$.

%

In Ref.~\cite{MBO}, an efficient algorithm (which is now called the \emph{MBO scheme}) was proposed to approximate the gradient descent of the GL functional using threshold dynamics. See Refs.~\cite{MBOconverge,MBOconverge2,Selim} for discussions of the convergence of the MBO scheme. Inspired by the MBO scheme, the authors of Ref. \cite{SelimTsai} developed a method using a PDE framework to minimize the piecewise-constant Mumford-Shah functional (introduced in Ref. \cite{Mumford}) for image segmentation. Their algorithm was motivated by the Chan-Vese level-set method \cite{ChanVese} for minimizing certain variants of the Mumford-Shah functional. Note that the Chan-Vese method is related to our reformulation of modularity, because it uses the TV as a regularizer along with $\ell_2$ based fitting terms. The authors of Refs.~\cite{MBO-tiana, MBO-multiGL} applied the MBO scheme to graph-based problems.

The gradient-descent equation of \eqref{relaxation} is
\begin{align}
	\frac{\partial f}{\partial t}=- (\mathbf{L}f^{(1)},\ldots,\mathbf{L}f^{(\hat{n})}) - \frac{1}{\epsilon^2} \nabla W_{\mathrm{multi}}(f) +\frac{\delta}{\delta f} \left(\gamma \|f-\mathrm{mean}(f)\|^2_{\ell_2}\right)\,,\label{GradientDescent}
\end{align}
where $\nabla W_{\mathrm{multi}}(f) : G \to \real^{\hat{n}}$ is the composition of the functions $\nabla W_{\mathrm{multi}}$ and $f$.  Thus, one can follow the idea of the original MBO scheme to split (\ref{GradientDescent}) into two parts and replace the forcing part $\frac{\partial f}{\partial t}=-\frac{1}{\epsilon^2} \nabla W_{\mathrm{multi}}(f) $ by an associated thresholding.


We propose a {\em Modularity MBO scheme} that alternates between the following two primary steps to obtain an approximate solution $f^n:G\rightarrow V^{\hat{n}}$:
\vspace{.2 in}
\begin{description}
\item[Step 1.]\hfill \\
 A gradient-descent process of temporal evolution consists of a diffusion term and an additional balance term:
\begin{align}
	\frac{\partial f}{\partial t}=-(\mathbf{L}f^{(1)},\ldots,\mathbf{L}f^{(\hat{n})}) +\frac{\delta}{\delta f} \left(\gamma \|f-\mathrm{mean}(f)\|^2_{\ell_2}\right)\,. \label{stepone}
\end{align}
We apply this process on $f^n$ with time $\tau_n$, and we repeat it for $\eta$ time steps to obtain $\hat{f}$.

\item[Step 2.] \hfill\\
We threshold $\hat{f}$ from $R^{\hat{n}}$ into $V^{\hat{n}}$:
\begin{align*}
	f^{n+1}_i=\vec{e}_{g_i}\in V^{\hat{n}}\,, ~\mathrm{where}~g_i=\mathrm{argmax}_{\{1\leq l \leq \hat{n}\}}\{\hat{f}_i^{(l)}\}\,.
\end{align*}
This step assigns to $f_i^{n+1}$ the node in $V^{\hat{n}}$ that is the closest to $\hat{f}_i$.
\end{description}
\vspace{.2 in}

To solve (\ref{stepone}), we implement a {\em convex-splitting} scheme \cite{DEyre,ConvexS2}. Equation (\ref{stepone}) is the gradient flow of the energy $H_1+H_2$, where $H_1(f):=\frac{1}{2}\sum_{l=1}^{\hat{n}}\langle f^{(l)},\textbf{L}f^{(l)}\rangle$ is convex and $H_2(f):=-\gamma \|f - \mathrm{mean}(f)\|^2_{\ell_2}$ is concave.  In a discrete-time stepping scheme, the convex part is treated implicitly in the numerical scheme, whereas the concave part is treated explicitly. Note that the convex-splitting scheme for gradient-descent equations is an unconditionally stable time-stepping scheme.


The discretized time-stepping formula is
\begin{align}
	\frac{\hat{f}-f^n}{\tau_n} &= -\frac{\delta H_1}{\delta f}(\hat{f})-\frac{\delta H_2}{\delta f}(f^n) \nonumber \\
	  &= -(\mathbf{L}\hat{f}^{(1)},\ldots,\mathbf{L}\hat{f}^{(\hat{n})}) + 2\gamma \vec{k} \odot(f^n - \mathrm{mean}(f^n))\,,
\end{align}
where $(\vec{k}\odot f)(n_i):=k_if_i$, $\hat{f}:G\rightarrow \mathbb{R}^{\hat{n}}$, ($k_i$ is the strength of node $n_i$), and $f^n:G\rightarrow V^{\hat{n}}$.
At each 
step, we thus need to solve
\begin{align}
	\left((1+\tau_n \textbf{L})\hat{f}^{(1)}, \ldots,(1+\tau_n \textbf{L})\hat{f}^{(\hat{n})}\right) =f^n +2\gamma \tau_n \vec{k}\odot \left[f^n-\mathrm{mean}(f^n)\right]\,. \label{step1}
\end{align}

For the purpose of computational efficiency, we utilize the low-order (leading) eigenvectors (associated with the smallest eigenvalues) of the graph Laplacian $\mathbf{L}$ to approximate the operator $\textbf{L}$. The eigenvectors with higher order are more oscillatory, and resolve finer scale. Leading eigenvectors provide a set of basis to approximately represent graph functions. The more leading eigenvectors are used, the finer scales can be resolved. In the graph-clustering literature, scholars usually use a small portion of leading eigenvectors of $\textbf{L}$ to find useful structural information in a graph \cite{fan, ShiMalik,DiffusionMap,belkin,Newmanspectral}, (note however that some recent work has explored the use of other eigenvectors \cite{Cucuringu}). In contrast, 
one typically uses much more modes when solving partial differential equations numerically (e.g., consider a psuedospectral scheme), because one needs to resolve the solution at much finer scales.

Motivated by the known utility and many successes of using leading eigenvectors (and discarding higher-order eigenvectors) in studying graph structure, we project $f$ onto the space of the $N_{\mathrm{eig}}$ leading eigenvectors to approximately solve (\ref{step1}).
Assume that $f^n = \sum_s \phi_s \mathbf{a}^n_s$, $\hat{f} = \sum_s \phi_s \mathbf{\hat{a}}_s$, and $2\gamma \tau_n \vec{k}\odot (f^n-\mathrm{mean}(f^n))=\sum_s \phi_s\mathbf{b}^n_s$, where $\{\lambda_s\}$ are the $N_{\mathrm{eig}}$ smallest eigenvalues of the graph Laplacian $\textbf{L}$.  We denote the corresponding eigenvectors (eigenfunctions) by $\{\phi_s\}$. Note that $\mathbf{a}^n_s$, $\mathbf{\hat{a}}_s$, and $\mathbf{b}^n_s$ all belong to $\mathbb{R}^{\hat{n}}$.  With this representation, we obtain
\begin{align}
	\mathbf{\hat{a}}_s=\frac{\mathbf{a}^n_s+\mathbf{b}^n_s}{1+\tau_n \lambda_s}\,, \quad l\in\{1, 2,\ldots,N_{\mathrm{eig}}\}
\end{align}
from (\ref{step1}) and are able to solve (\ref{step1}) more efficiently.

We summarize our Modularity MBO scheme in Algorithm \ref{algorithm}. Note that the time complexity of each MBO iteration step is $O(N)$.

\begin{algorithm*}[h]
\begin{algorithmic}
\State Set values for $\gamma$, $\hat{n}$, $\eta$, and $\tau_n = dt$.
\State Input $\leftarrow$ an initial function $f^0:G\rightarrow V^{\hat{n}}$ and the 
eigenvalue-eigenvector pairs $\{(\lambda_s, \phi_s)\}$ of the graph Laplacian $\textbf{L}$ corresponding to the $N_{\mathrm{eig}}$ smallest eigenvalues.

\State Initialize:
  \State $\mathbf{a}^0_s=\langle f^0,\phi_s \rangle$;
  \State $\mathbf{b}^0_s =\langle 2\gamma dt \vec{k}\odot(f^0-\mathrm{mean}(f^0)),\phi_s \rangle$.
\While {$f^{n}\not= f^{n-1}$ and $n\leq500$:}
\State Diffusion:
 \For{$i=1 \to \eta$} 
      \State $\mathbf{a}^n_s \leftarrow \frac{\mathbf{a}^n_s+\mathbf{b}^n_s}{1+dt \lambda_s}\,,~\mathrm{for }~~s\in\{1, 2,\ldots,N_{\mathrm{eig}}\}$;
   \State $f^n \leftarrow \sum_s \phi_s\mathbf{a}^n_s$;
   \State $\mathbf{b}^{n}_s =\langle 2\gamma dt \vec{k}.*(f^n-\mathrm{mean}(f^n)),\phi_s \rangle$;
   \State i=i+1;
\EndFor
  \State Thresholding:
  \begin{align*}
  f^{n+1}_i=\vec{e}_{g_i}\in V^{\hat{n}},~\mathrm{where}~g_i=\mathrm{argmax}_{\{1\leq l \leq \hat{n}\}}\{\hat{f}_i^{(l)}\}.
  \end{align*}
  \State $n=n+1;$
\EndWhile
  \State Output $\leftarrow$ the partition function $f^n$.
\end{algorithmic}
\caption{The Modularity MBO scheme.}
\label{algorithm}
  \end{algorithm*}

Unless specified otherwise, the numerical experiments in this paper using a random initial function $f^0$. (It takes its value in $V^{\hat{n}}$ with uniform probability by using the command {\tt rand} in {\sc Matlab}.)


\subsection{Two Implementations of the Modularity MBO Scheme}

Given an input value of the parameter $\hat{n}$, the Modularity MBO scheme partitions a graph into at most $\hat{n}$ communities. In many applications, however, the number of communities is usually not known in advance \cite{MasonNotice,Fortunato09}, so it can be difficult to decide what values of $\hat{n}$ to use. Accordingly, we propose two implementations of the Modularity MBO scheme. The \emph{Recursive Modularity MBO (RMM)} scheme is particularly suitable for networks that one expects a large number of communities, whereas the \emph{Multiple Input-$\hat{n}$ Modularity MBO (Multi-$\hat{n}$ MM)} scheme is particularly suitable for networks that one expects to have a small number of communities.

\textbf{\itshape Implementation 1.} The RMM scheme performs the Modularity MBO scheme recursively, which is particular suitable for networks that one expects to have a large number of communities. In practice, we set the value of $\hat{n}$ to be large in the first round of applying the scheme, and we then let it be small for the rest of the recursion steps. In the experiments that we report in the present paper, we use $\hat{n}=50$ for the first round and $\hat{n}=\mathrm{min}(10, |S|)$ thereafter, where $|S|$ is the size of the subnetwork that one is partitioning in a given step. (We also tried $\hat{n}=10$, $20$ or $30$ for the first round and $\hat{n}=\mathrm{min}(10, |S|)$ thereafter. The results are similar.)



Importantly, the minimization problem (\ref{minimization}) needs a slight adjustment for the recursion steps. Assume for a particular recursion step that we perform the Modularity MBO partitioning with parameter $\hat{n}$ on a network $S \subset G$ containing a subset of the nodes of the original graph.  Our goal is to increase the modularity for the global network instead of the subnetwork $S$. Hence, the target energy to minimize is
\begin{align*}
	H^{(S)}(f) := |f|^{(S)}_{TV} -\gamma \frac{m^{(S)}}{m}\left\|f-\mathrm{mean}^{(S)}(f)\right\|^2_{\ell_2}\,,
\end{align*}
where $f:S\rightarrow V^{\hat{n}} \subset \mathbb{R}^{\hat{n}}$, the TV norm is $|f|^{(S)}_{TV}=\frac{1}{2}\sum_{i,j \in S}w_{ij}|f_i-f_j|_{\ell_1}$, the total edge weight of $S$ is $2m^{(S)} = \sum_{i\in S}k_i$, and $\mathrm{mean}^{(S)}(f) =\frac{1}{2m^{(S)}}\sum_{i \in S} k_if_i$. The rest of the minimization procedures are the same as described previously.

Note that this recursive scheme is adaptive in resolving the network structure scale. The eigenvectors of the subgroups are recalculated at each recursive step, so the scales being resolved get finer as the recursion step goes. Therefore $N_{\mathrm{eig}}$ need not to be very large.

\textbf{\itshape Implementation 2.} For the Multi-$\hat{n}$ MM scheme, one sets a search range $T$ for $\hat{n}$, runs the Modularity MBO scheme for each $\hat{n}\in T$, and then chooses the resulting partition with the highest modularity score. It works well if one knows the approximate maximum number of communities and that number is reasonably small.  One can then set the search range $T$ to be all integers between 2 and the maximum number. Even though the Multi-$\hat{n}$ MM scheme allows partitions with fewer than $\hat{n}$ clusters, it is still necessary to include small values of $\hat{n}$ in the search range to better avoid local minimums. (See the discussion of the MNIST ``4-9'' digits network in Section \ref{mnist49}.) For different values of $\hat{n}$, one can reuse the previously computed eigenvectors because $\hat{n}$ does not affect the graph Laplacian. Inputting multiple choices for the random initial function $f^0$ (as described at the end of Section \ref{section algorithm}) also helps to reduce the chance of getting stuck in a minimum and thereby to achieve a good optimal solution for the Modularity MBO scheme. Because this initial function is used after the computation of eigenvectors, it only takes a small amount of time to rerun the MBO steps.


In Section \ref{numres}, we test these two schemes on several real and synthetic networks.


\section{Numerical Results}\label{numres}

In this section, we present the numerical results of experiments that we conducted using both synthetic and real network data sets. Unless otherwise specified, our Modularity MBO schemes are all implemented in {\sc Matlab}, (which are not optimized for speed). In the following tests, we set the parameters of the Modularity MBO scheme to be $\eta=5$ and $\tau_n=1$.


\subsection{LFR Benchmark}

In Ref.~\cite{LFR}, Lancichinetti, Fortunato, and Radicchi (LFR) introduced an eponymous class of synthetic benchmark graphs to provide tougher tests of community-detection algorithms than previous synthetic benchmarks.
Many real networks have heterogeneous distributions of node degree and community size,
so the LFR benchmark graphs incorporate such heterogeneity. They consist of unweighted networks with a predefined set of non-overlapping communities. As described in Ref.~\cite{LFR}, each node is assigned a degree from a power-law distribution with power $\xi$; additionally, the maximum degree is given by $k_{\mathrm{max}}$ and mean degree is $\langle k\rangle$. Community sizes in LFR graphs follow a power-law distribution with power $\beta$, subject to the constraint that the sum of the community sizes must equal the number of nodes $N$ in the network. Each node shares a fraction $1-\mu$ of its edges with nodes in its own community and a fraction $\mu$ of its edges with nodes in other communities.  (The quantity $\mu$ is called the \emph{mixing parameter}.) The minimum and maximum community sizes, $q_{\mathrm{min}}$ and $q_{\mathrm{max}}$, are also specified. We label the LFR benchmark data sets by $(N,\langle k\rangle, k_{\mathrm{max}}, \xi,\beta,\mu,q_{\mathrm{min}},q_{\mathrm{max}})$. The code used to generate the LFR data is publicly available provided by the authors in \cite{LFR}.

The LFR benchmark graphs has become a popular choice for testing community detection-algorithms, and Ref.~\cite{LFRcomp} uses them to test the performance of several community-detection algorithms.  The authors concluded, for example, that the locally greedy Louvain algorithm \cite{Blondel} is one of the best performing heuristics for maximizing modularity based on the evaluation of the \emph{normalized mutual information} (NMI) (discussed below in this section). Note that the time complexity of this Louvain algorithm is $O(M)$ \cite{Fortunato09}, where $M$ is the number of nonzero edges in the network.
In our tests, we use the GenLouvain code (in {\sc Matlab}) Ref.~\cite{Netwiki}, which is an implementation of a Louvain-like algorithm. The GenLouvain code a modification of the Louvain locally greedy algorithm \cite{Blondel}, but it was not designed to be optimal for speed.
 We implement our RMM scheme on the LFR benchmark, and we compare our results with those of running the GenLouvain code. We use the recursive version of the Modularity MBO scheme because the LFR networks used here contain about $0.04N$ communities.




We implement the modularity-optimization algorithms on severals sets of LFR benchmark data. 
We then compare the resulting partitions with the known community assignments of the benchmarks (i.e., the ground truth) by examining the \emph{normalized mutual information} (NMI) \cite{NMI}. 

Normalized mutual information (NMI) is a similarity measure for comparing two partitions based on the information entropy, and it is often used for testing community-detection algorithms \cite{LFR, LFRcomp}. The NMI equals 1 when two partitions are identical, and it has an expected value of $0$ when they are independent. For an $N$-node network with two partitions, $C=\{C_1,C_2,\ldots,C_K\}$ and $\hat{C}=\{\hat{C}_1,\hat{C}_2,\ldots,\hat{C}_{\hat{K}}\}$, that consist of non-overlapping communities, the NMI is
\begin{equation} 	
	\mathrm{NMI}(C,\hat{C})=\frac{2\sum_{k=1}^K \sum_{\hat{k}=1}^{\hat{K}}P(k,\hat{k})\mathrm{log}\left[\frac{P(k,\hat{k})}{P(k)P(\hat{k})}\right]}{-\sum_{k=1}^K P(k)\mathrm{log}\left[P(k)\right] - \sum_{\hat{k}=1}^{\hat{K}}P(\hat{k}) \mathrm{log}\left[P(\hat{k})\right]}\,,
\end{equation}	
where $P(k,\hat{k})=\frac{|C_k\cap \hat{C}_{\hat{k}}|}{N}$, $P(k)=\frac{|C_k|}{N}$, and $P(\hat{k})=\frac{|\hat{C}_{\hat{k}}|}{N}$.

 \begin{figure}[h!]
\centering
\subfigure[NMI and Modularity ($Q$).]{\includegraphics[width=0.7\textwidth]{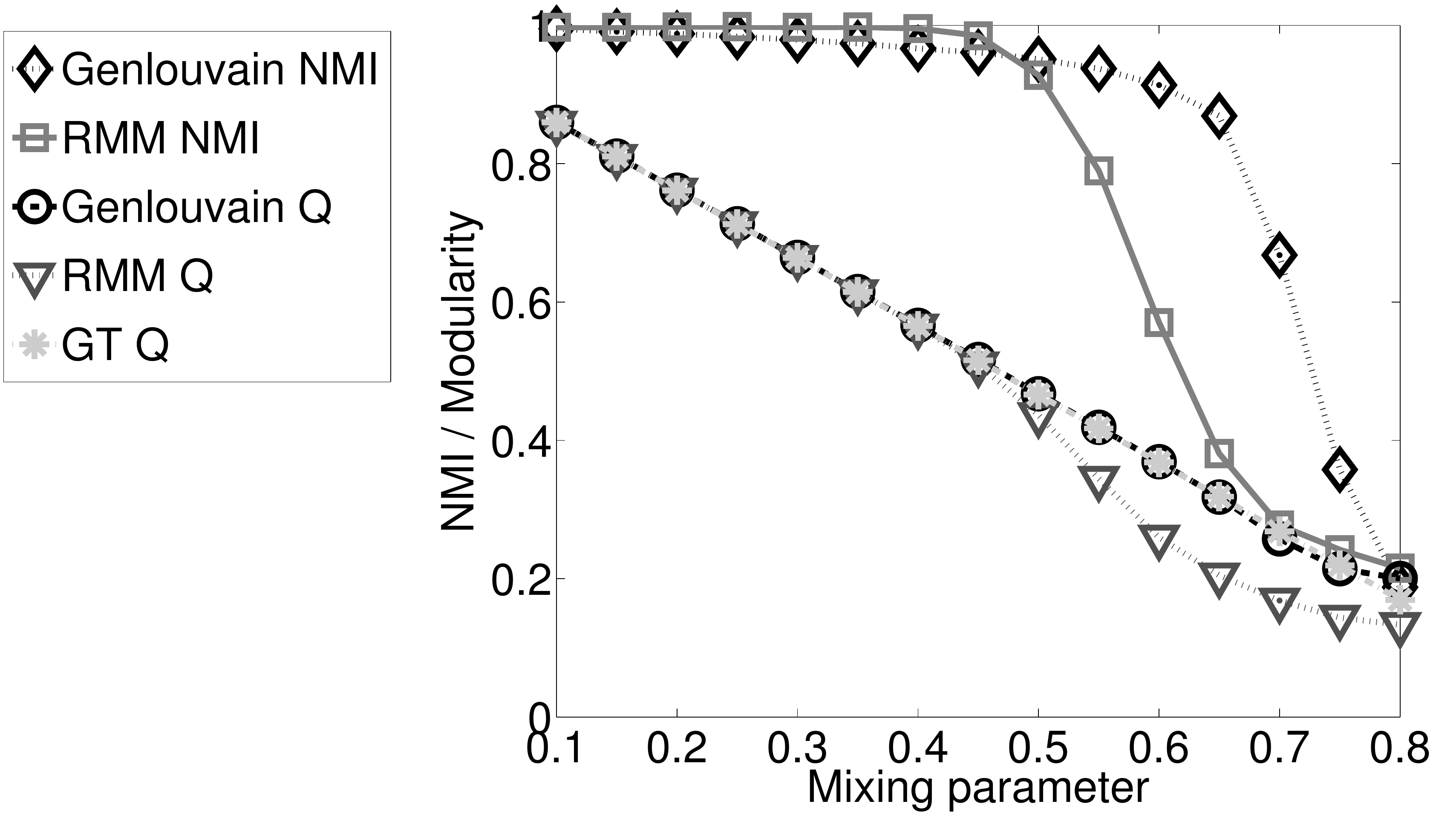}}
\subfigure[Number of Communities ($N_c$).]{\includegraphics[width=0.7\textwidth]{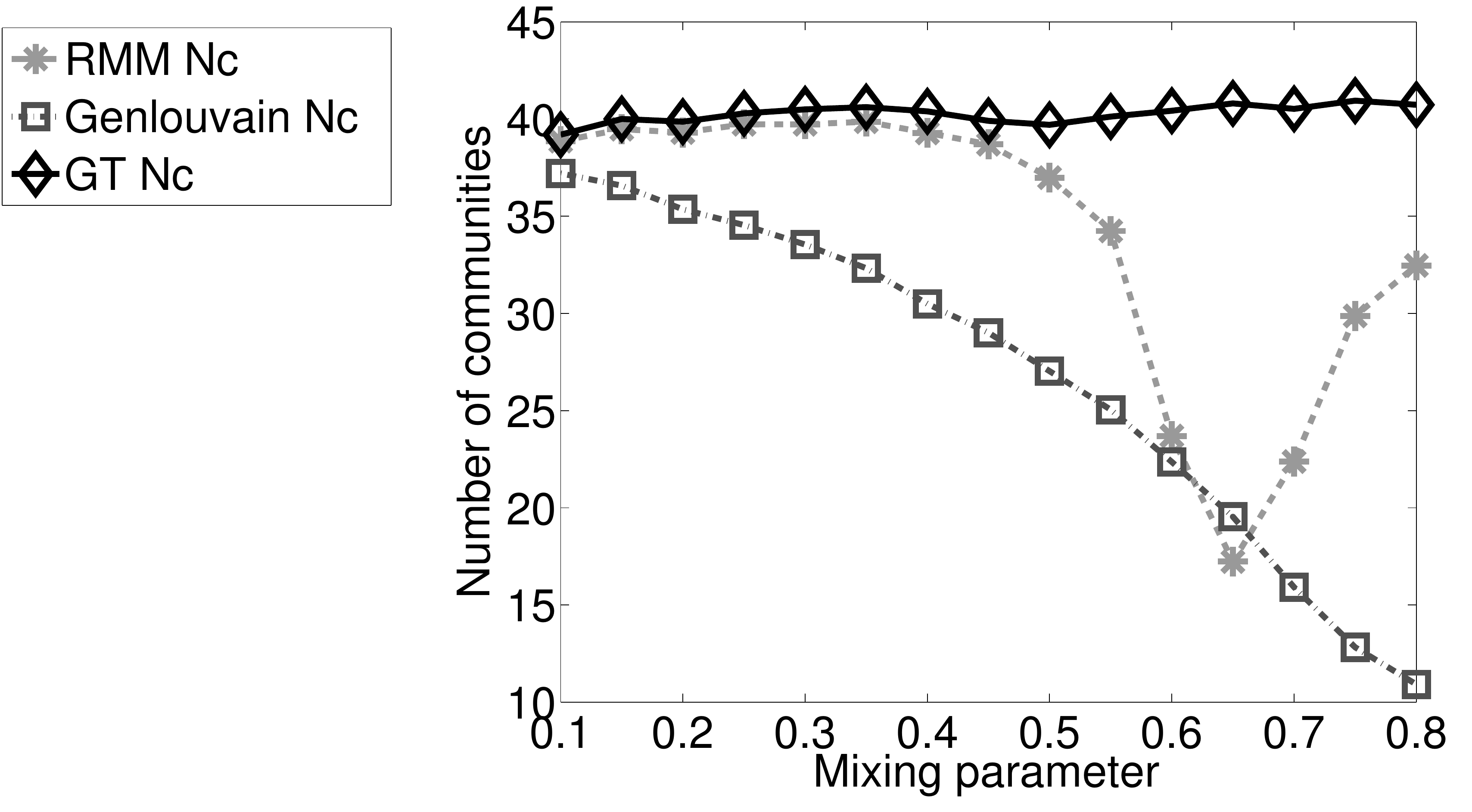}}\\
\caption{Tests on LFR1k networks with RMM and GenLouvain. The ground-truth communities are denoted by GT.
}
\label{LFRfigure}
\end{figure}

We examine two types of LFR networks. One is the 1000-node ensembles used in Ref.~\cite{LFRcomp}: \[\mathrm{LFR1k}: (1000, 20,50,2,1,\mu,10,50)\,,\] where $\mu\in\{0.1, 0.15,\ldots,0.8\}$. The other is a 50,000-node network, which we call ``LFR50k" and construct as a composition of 50 LFR1k networks. (See the detailed description below.)


\subsubsection{LFR1k Networks}\label{1k} 

We use the RMM scheme (with $N_{\mathrm{eig}}=80$) and the GenLouvain code on ensembles of LFR1k$(1000, 20,50,2,1,\mu,10,50)$ graphs with mixing parameters $\mu \in\{0.1, 0.15,\ldots,0.8\}$.  We consider 100 LFR1k networks for each value of $\mu$. The resolution parameter $\gamma$ equals one here.
 
 In Fig.~\ref{LFRfigure}, we plot the mean maximized modularity score ($Q$), the number of communities ($N_c$), and the NMI of the partitions compared with the ground truth (GT) communities as a function of the mixing parameter $\mu$. As one can see from panel (a), the RMM scheme performs very well for $\mu <0.5$. Both its NMI score and modularity score are competitive with the results of GenLouvain. However, for $\mu \geq 0.5$, its performance drops with respect to both NMI and the modularity scores of its network partitions. From panel (b), we see that RMM tends to give partitions with more communities than GenLouvain, and this provides a better match to the ground truth. However, it is only trustworthy for  $\mu <0.5$, when its NMI score is very close to $1$.
 
The mean computational time for one ensemble of LFR1k, which includes 15 networks corresponding to 15 values of $\mu$, is 22.7 seconds for the GenLouvain code and 17.9 seconds for the RMM scheme.  As we will see later when we consider large networks, the Modularity MBO scheme scales very well in terms of its computational time.


 \subsubsection{LFR50k Networks}  

To examine the performance of our scheme on larger networks, we construct synthetic networks (LFR50k) with 50,000 nodes. {To construct an LFR50k network, we start with 50 different LFR1k networks $N_1, N_2,\ldots, N_{50}$ with mixing parameter $\mu$, and we connect each node in $N_s$ ($s\in \{1,2,\ldots,50\}$) to $20\mu $ nodes in $N_{s+1}$ uniformly at random (where we note that $N_{51}=N_1$). We thereby obtain an LFR50k network of size $50,000$. Each community in the original $N_s, s=1,2,\ldots,50$ is a new community in the LFR50k network.} 
We build four such LFR50k networks for each value of $\mu=0.1, 0.15,\ldots,0.8$, and we find that all such networks contain about 2000 communities. The mixing parameter of the LFR50k network constructed from LFR1k($\mu$) is approximately $\frac{2\mu}{1+\mu}$. 
 
 
 
By construction, the LFR50k network has a similar structure as LFR1k. Importantly, simply increasing $N$ in LFR$(N,\langle k\rangle, k_{\mathrm{max}}, \xi,\beta,\mu,q_{\mathrm{min}},q_{\mathrm{max}})$ to 50,000 is insufficient to preserve similarity of the network structure. A large $N$ results in more communities, so if the mixing parameter $\mu$ is held constant, then the edges of each node that are connected to nodes outside of its community will be distributed more sparsely. In another words, the mixing parameter does not entirely reflect the balance between a node's connection within its own community versus to its connections to other communities, as there is also a dependence on the total number of communities.

The distribution of node strengths in LFR50k is scaled approximately by a factor of $(1+2\mu)$ compared to LFR1k, while the total number of edges in LFR50k is scaled approximately by a factor of $50(1+2\mu)$. Therefore, the probability null model term $\frac{k_ik_j}{2m}$ in modularity (\ref{modularity}) is also scaled by a factor of $\frac{(1+2\mu)}{50}$. {Hence, in order to probe LFR50k with a resolution scale similar to that in LFR1k, it is reasonable to use the resolution $\gamma=50$ to try to minimize issues with modularity's resolution limit \cite{resolution}.} We then implement the RMM scheme ($N_{\mathrm{eig}}=100$) and the GenLouvain code. Note that we also implemented the RMM scheme with $N_{\mathrm{eig}}=500$, but there is no obvious improvement in the result even though there are about $2000$ communities. This is because the eigenvectors of the subgroups are recalculated at each recursive step, so the scales being resolved get finer as the recursion step goes.



 \begin{figure}[h!]
\centering
\subfigure[NMI and Modularity ($Q$).]{\includegraphics[width=0.7\textwidth]{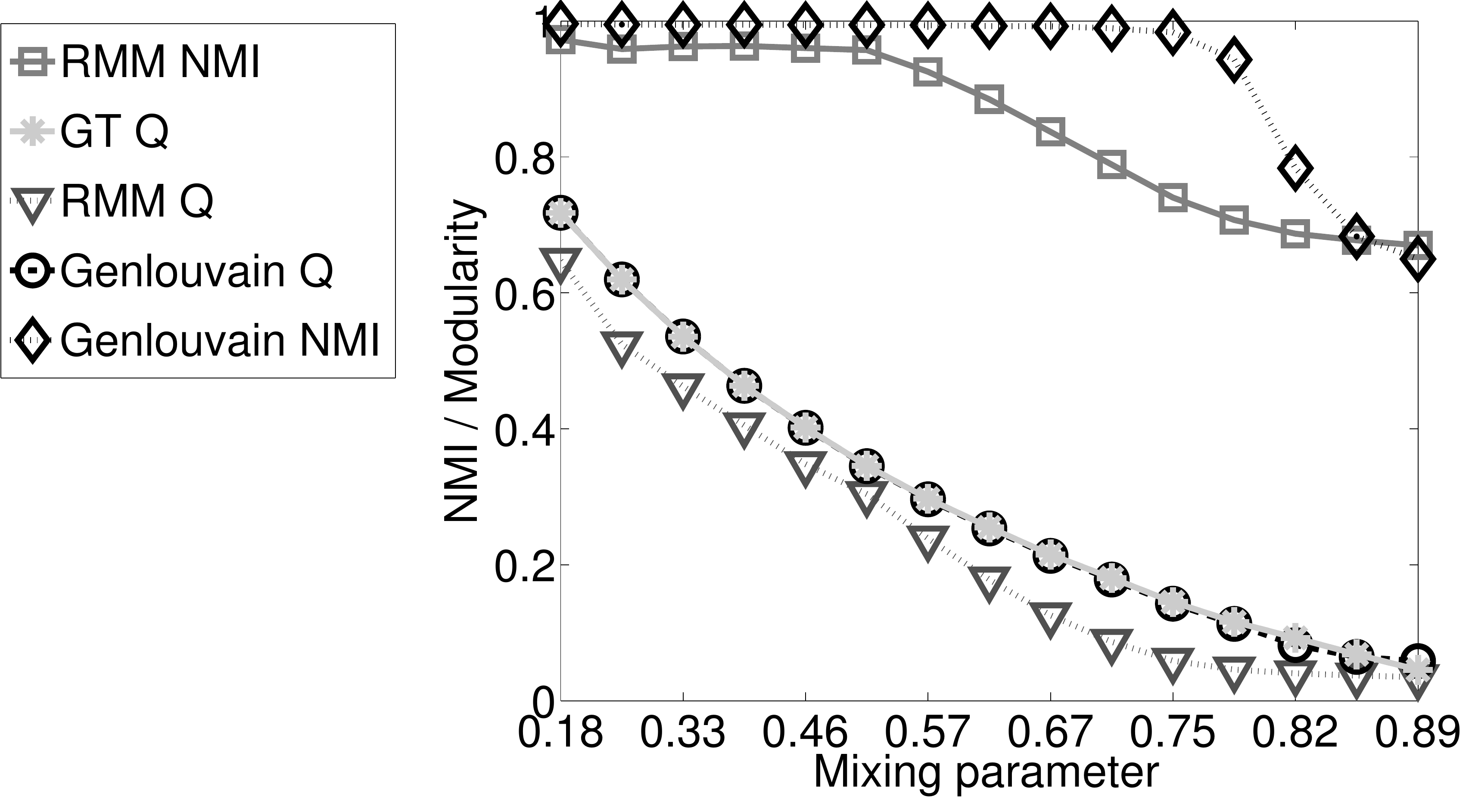}}\\
\subfigure[Number of Communities ($N_c$).]{\includegraphics[width=0.7\textwidth]{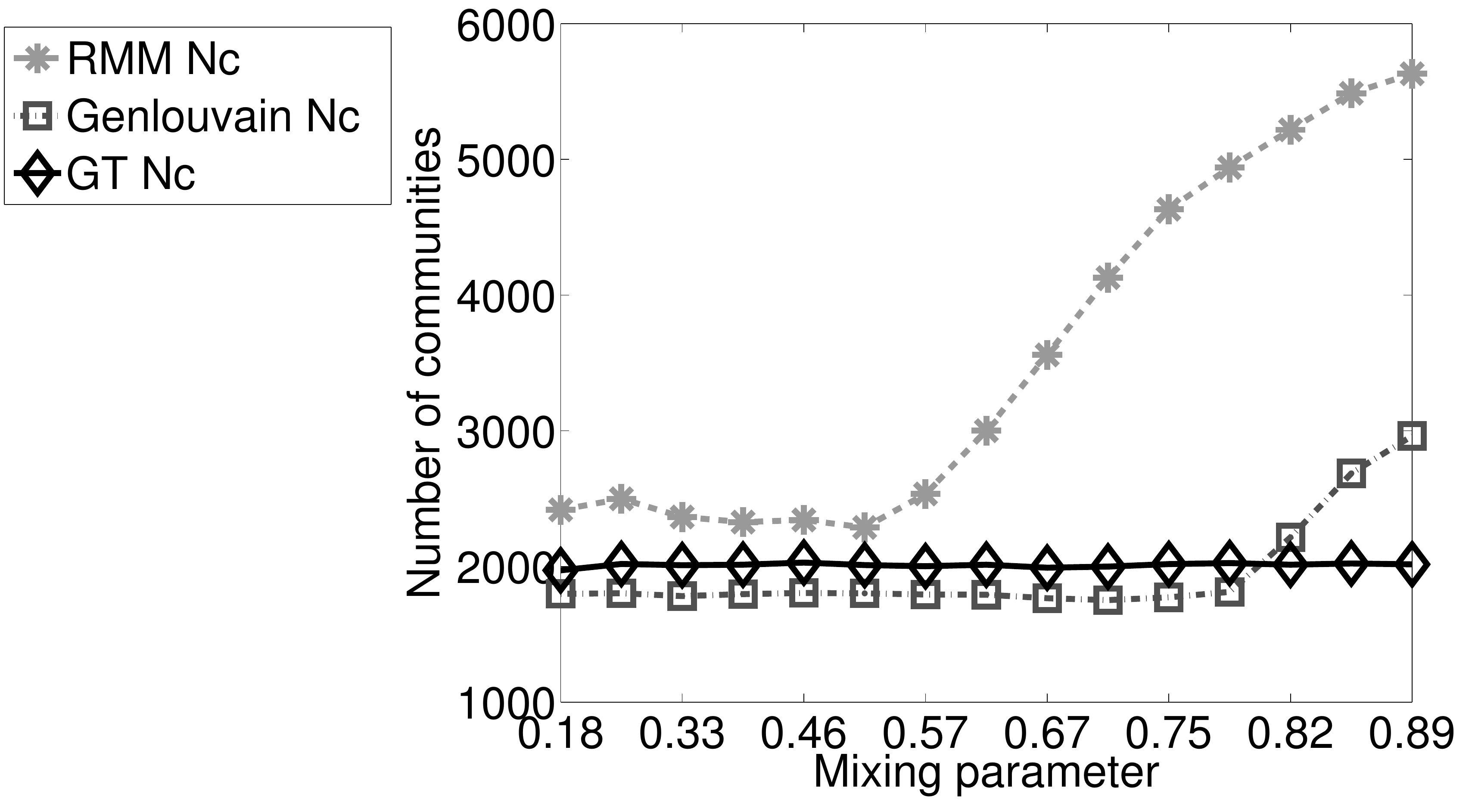}}
\caption{Tests on LFR50k data with RMM and GenLouvain.
}
\label{LFR50k}
\end{figure}

We average the network diagnostics over the four LFR50k networks for each value of mixing parameter.
In Fig.~\ref{LFR50k}, we plot the network diagnostics 
versus the mixing parameter $\frac{2\mu}{1+\mu}$ for $\mu \in \{0.1,0.15,\ldots,0.8\}$. 
In panel (a), we see that the performance of RMM is good only when the mixing parameter is less than 0.5, though it is not as good as GenLouvain. It seems that the recursive Modularity MBO scheme has some difficulties in dealing with networks with very large number of clusters. 

However the computational time of RMM is lower than that of the GenLouvain code \cite{Netwiki} (though we note that it is an implementation that was not optimized for speed). The mean computational time for an ensemble of LFR50k networks, which includes 15 networks corresponding to 15 values of $\mu$, is 690 seconds for GenLouvain and 220 seconds for the RMM scheme. In Table \ref{table LFR}, we summarize the mean computational time (in seconds) on each ensemble of LFR data.


 \begin{table}[h!]
\centering
\begin{tabular}{ |c | c|c|  }
  \hline                        
   &LFR1k &LFR50k\\
 \hline 
  GenLouvain&22.7 s&690 s\\
   \hline
  RMM&17.9 s&220 s\\
   \hline
  \end{tabular} 
  \caption{}
  \label{table LFR}
 \end{table}



\subsection{MNIST Handwritten Digit Images}

The MNIST database consists of 70,000 images of size $28\times28$ pixels containing the handwritten digits ``0" through ``9" \cite{MNISTdata}. The digits in the images have been normalized with respect to size and centered in a fixed-size grey image. In this section, we use two networks from this database. We construct one network using all samples of the digits ``4" and digit ``9", which are difficult to distinguish from each other and which constitute 13782 images of the 70000. We construct the second network using all images.
 In each case, our goal is to separate the distinct digits into distinct communities.



We construct the adjacency matrices (and hence the graphs) $\textbf{W}$ of these two data sets as follows. First, we project each image (a $28^2$-dimensional datum) onto 50 principal components.  For each pair of nodes $n_i$ and $n_j$ in the 50-dimensional space, we then let $w_{ij}=\exp\left(-\frac{d_{ij}^2}{3\sigma^2}\right)$ if either $n_i$ is among the 10 nearest neighbors of $n_j$ or vice versa; otherwise, we let $w_{ij}=0$. The quantity $d_{ij}$ is the $\ell_2$ distance between $n_i$ and $n_j$, the parameter $\sigma$ is the mean of distances between $n_i $ and its 10th nearest neighbor.

In this data set, the maximum number of communities is 2 when considering only the digits ``4" and ``9", and it is 10 when considering all digits.  We can thus choose a 
small search range for $\hat{n}$ and use the Multi-$\hat{n}$ Modularity MBO scheme.


\subsubsection{MNIST ``4-9'' Digits Network}\label{mnist49} 

This weighted network has 13782 nodes and 194816 weighted edges. We use the labeling of each digit image as the ground truth.  There are two groups of nodes: ones containing the digit ``4" and ones containing the digit ``9".  We use these two digits because they tend to look very similar when they are written by hand.  In Fig.~\ref{Mnistfigure}(a), we show a visualization of this network, where we have projected the data projected onto the second and third leading eigenvectors of the graph Laplacian $\textbf{L}$. The difficulty of separating the ``4" and ``9" digits has been observed in the graph-partitioning literature (see, e.g., Ref.~\cite{pro:HeinSetzer11TightCheeger}). For example, there is a near-optimal partition of this network using traditional spectral clustering \cite{tutorialSC,ShiMalik} (see below) that splits both the ``4''-group and the ``9''-group roughly in half.   

The modularity-optimization algorithms that we discuss for the ``4-9''  network use $\gamma=0.1$. We choose this resolution-parameter value so that the network is partitioned into two groups by the GenLouvain code. The question about what value of $\gamma$ to choose is beyond the scope of this paper, but it has been discussed at some length in the literature on modularity optimization \cite{Fortunato09}. Instead, we focus on evaluating the performance of our algorithm with the given  value of the resolution parameter. We implement the Modularity MBO scheme with $\hat{n} = 2$ and the Multi-$\hat{n}$ MM scheme, and we compare our results with that of the GenLouvain code as well as traditional spectral clustering method \cite{tutorialSC,ShiMalik}. 


Traditional spectral clustering is an efficient clustering method that has been used widely in computer science and applied mathematics because of its simplicity. It calculates the first $k$ nontrivial eigenvectors $\phi_1,\phi_2,\ldots,\phi_k$ (corresponding to the smallest eigenvalues) of the graph Laplacian $\textbf{L}$. Let $U\in \mathbb{R}^{N\times k}$ be the matrix containing the vectors $\phi_1,\phi_2,\ldots,\phi_k$ as columns. For $i\in \{1,2,\ldots,N\}$, let $y_i\in \mathbb{R}^k$ be the $i$th row vector of $U$. Spectral clustering then applies the $k$-means algorithm to the points $(y_i)_{\{i=1,\ldots,N\}}$ and partitions them into $k$ groups, where $k$ is the number of clusters that was specified beforehand.


On this MNIST ``4-9'' digits network, we specify $k=2$ and implement spectral clustering to obtain a partition into two communities.  As we show in Fig.~\ref{Mnistfigure}(b), we obtain a near-optimal solution that splits both the ``4''-group and the ``9''-group roughly in half.  This differs markedly from the ground-truth partition in panel (a).

 \begin{figure}[h!]
\centering
\subfigure[Ground Truth]{\includegraphics[width=0.45\textwidth]{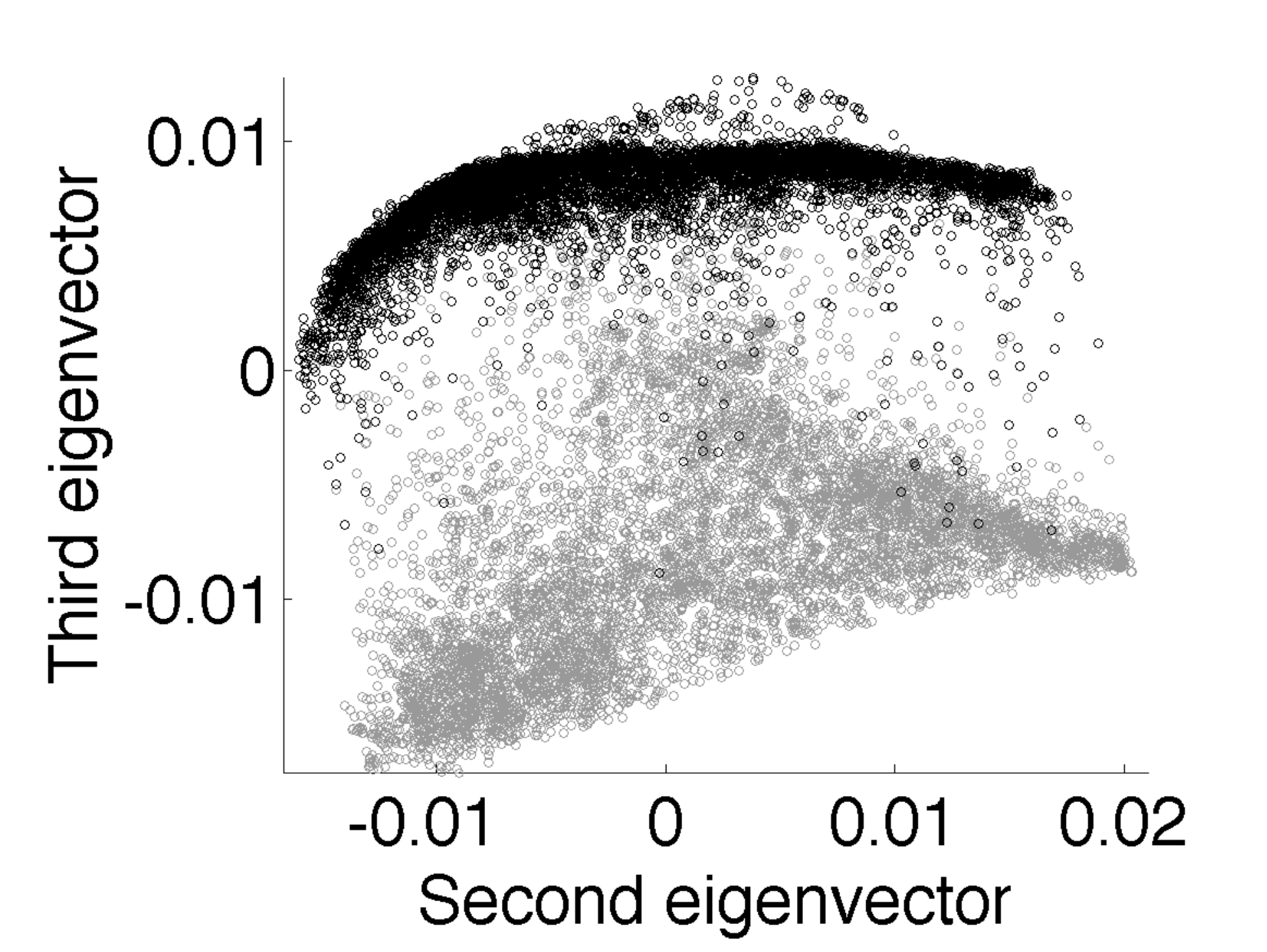}}
\subfigure[Spectral Clustering with $k$-Means]{\includegraphics[width=0.45\textwidth]{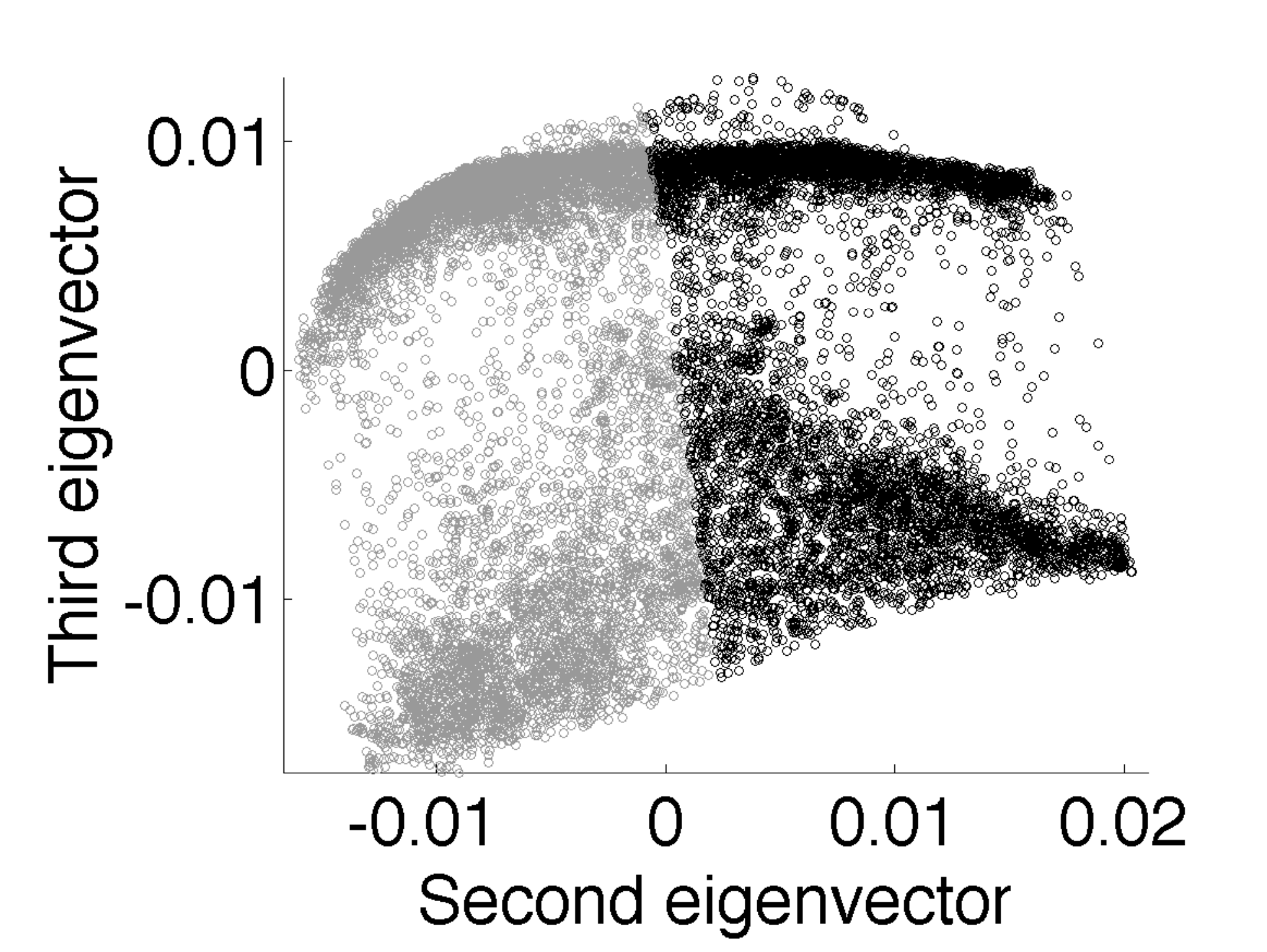}}\\
\subfigure[Modularity MBO Scheme with $\hat{n}$=2]{\includegraphics[width=0.45\textwidth]{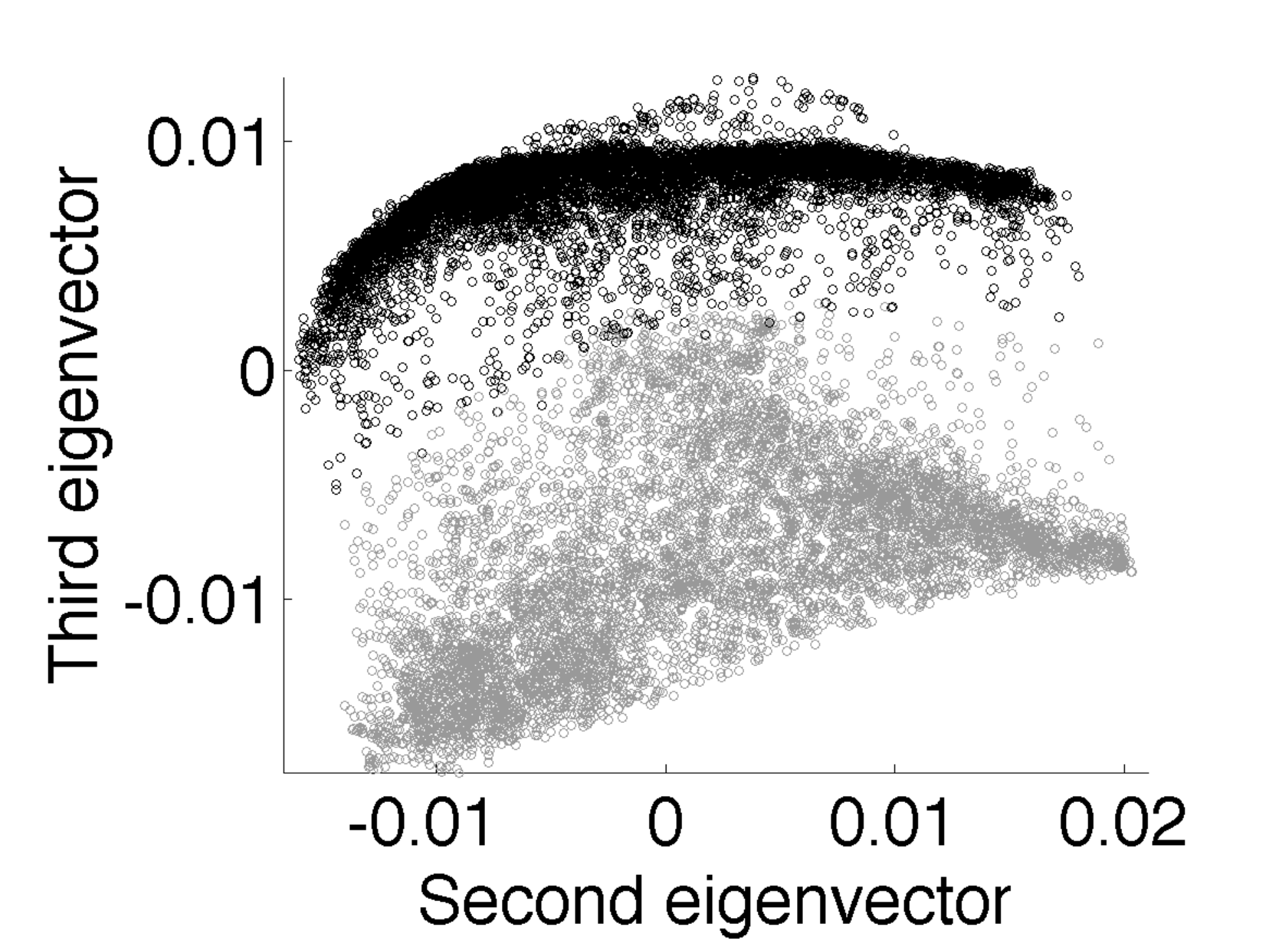}}
\subfigure[Modularity MBO Scheme with $\hat{n}=8$]{\includegraphics[width=0.45\textwidth]{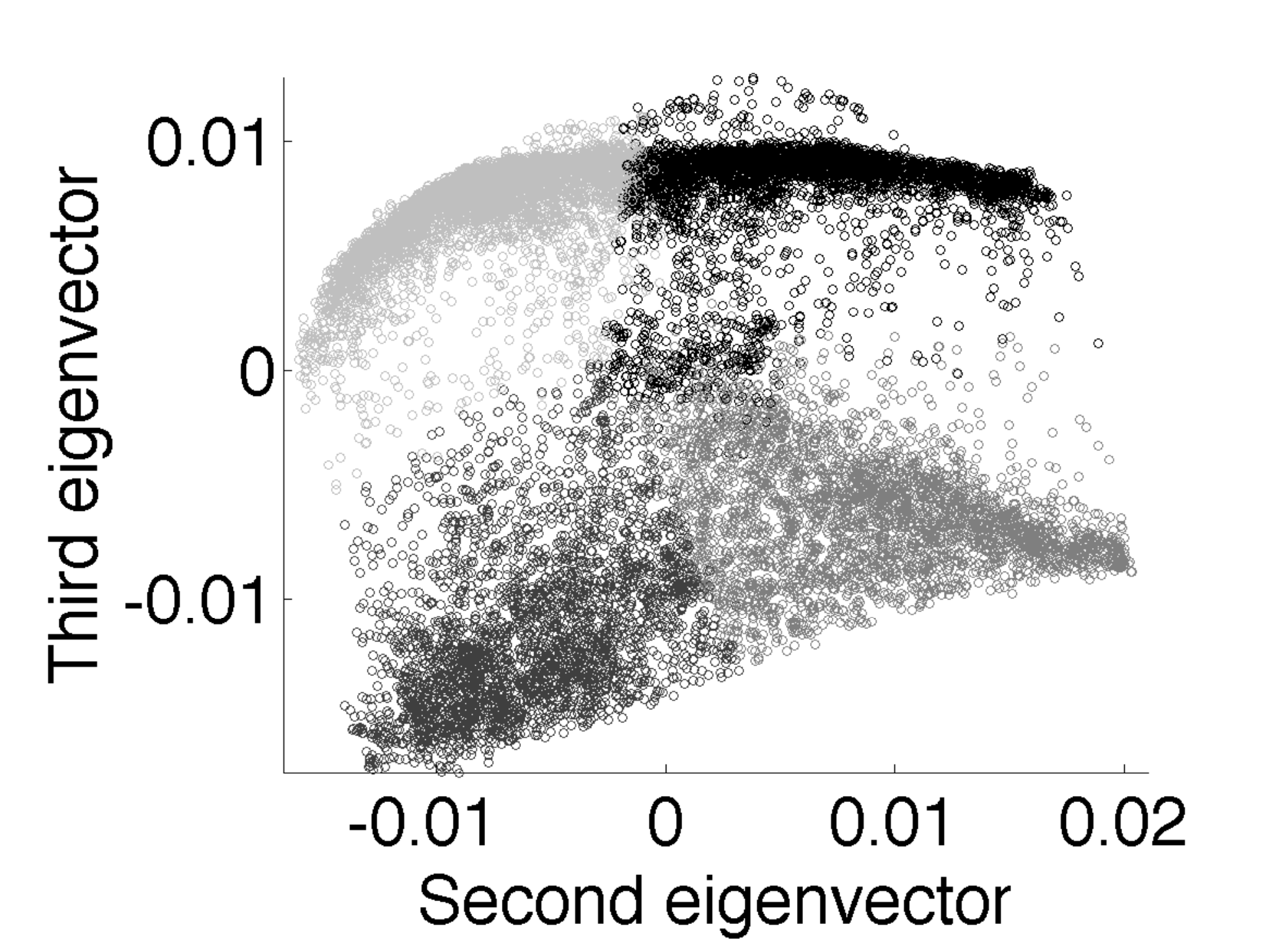}}\\
\subfigure[Modularity Score]{\includegraphics[scale=0.26]{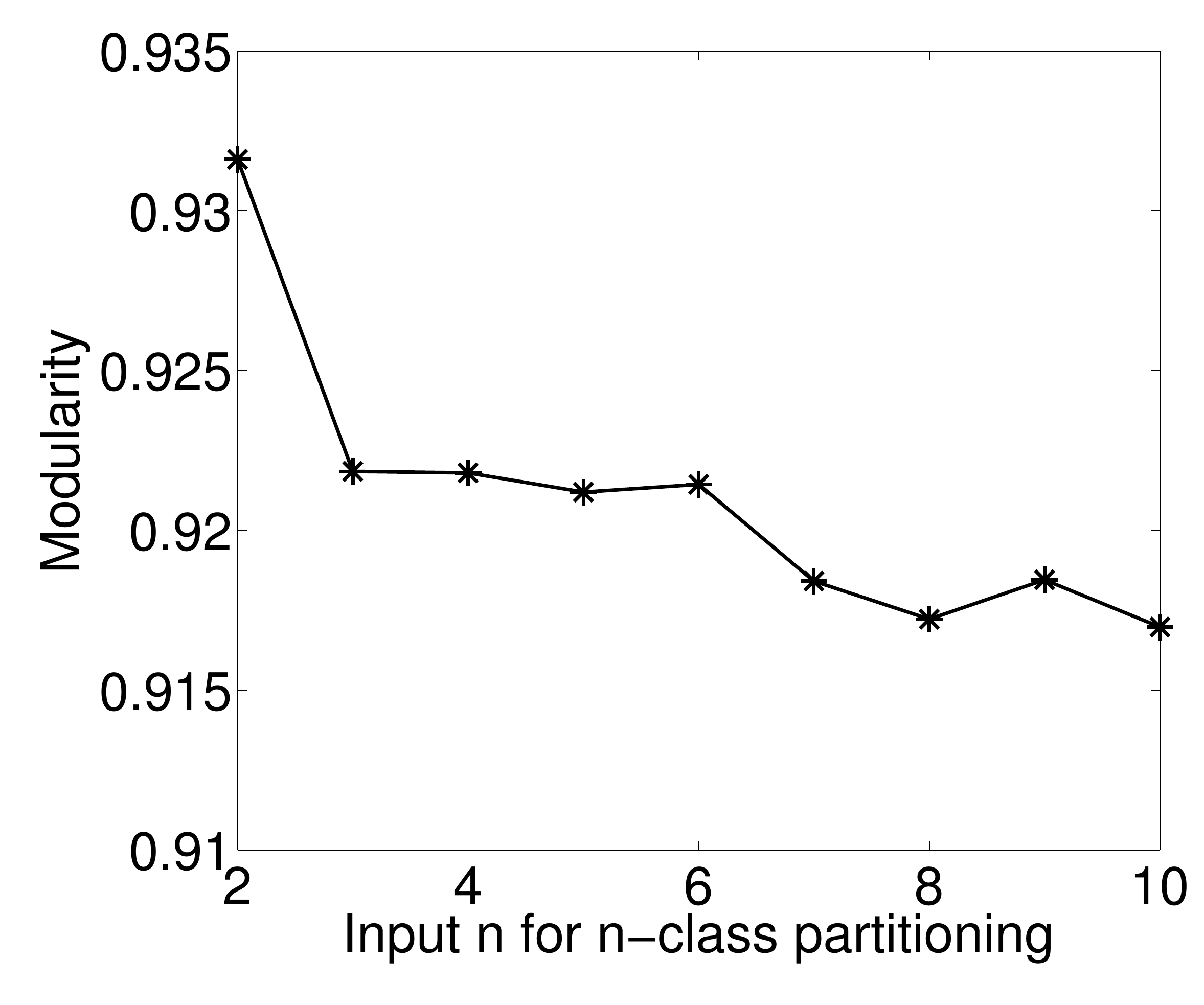}}
\subfigure[Group Assignments Visualization]{\includegraphics[scale=0.26]{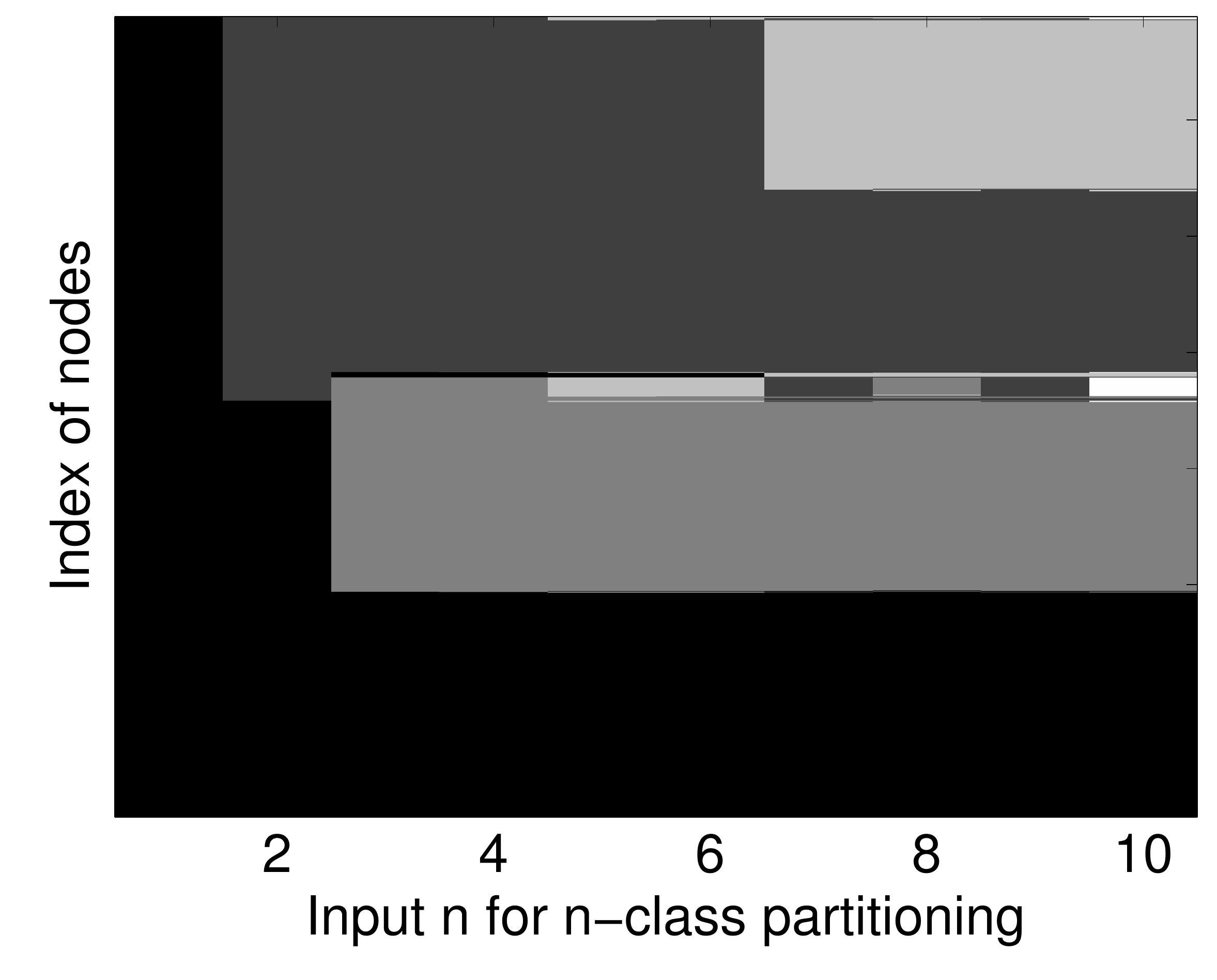}}
\label{Mnistfigure}
\caption{(a)--(d) Visualization of partitions on the MNIST ``4-9" digit image network by projecting it onto the second and third leading eigenvectors of the graph Laplacian. Shading indicates the community assignment. (e)--(f) Implementation results of the Multi-$\hat{n}$ Modularity MBO scheme on the MNIST ``4-9" digit images. In panel (a), shading indicates the community assignment. The horizontal axis represents the input $\hat{n}$ (i.e., the maximum number of communities), and the vertical axis gives the (sorted) index of nodes. In panel (b), we plot the optimized modularity score as a function of the input $\hat{n}$. 
}
\end{figure}

For the Multi-$\hat{n}$ MM scheme, we use $N_{\mathrm{eig}}=80$ and the search range $\hat{n}\in \{2,3,\ldots,10\}$. We show visualizations of the partition at $\hat{n}=2$ and $\hat{n} = 8$ in Figs.~\ref{Mnistfigure}(c,d). For this method, computing the spectrum of the graph Laplacian takes a significant portion of the run time (9 seconds for this data set).  Importantly, however, this information can be reused for multiple $\hat{n}$, which saves time. In Fig.~\ref{Mnistfigure}(e), we show a plot of this method's optimized modularity scores versus $\hat{n}$. Observe that the optimized modularity score achieves its maximum when we choose $\hat{n}=2$, which yields the best partition that we obtain using this method. In Fig.~\ref{Mnistfigure}(f), we show how the partition evolves as we increase the input $\hat{n}$ from 2 to 10. At $\hat{n}=2$, the network is partitioned into two groups (which agrees very well with the ground truth). For $\hat{n}>2$, however, the algorithm starts to pick out worse local optima, and either ``4''-group or the ``9''-group gets split roughly in half. Starting from $\hat{n}=7$, the number of communities stabilizes at about 4 instead of increasing with $\hat{n}$. This indicates that the Modularity MBO scheme allows one to obtain partitions with $N_c\leq\hat{n}$.


In Table \ref{table mnist49}, we show computational time and some network diagnostics for all of the resulting partitions. The modularity of the ground truth is $Q_{GT} \approx 0.9277$. Our schemes obtain high modularity and NMI scores that are comparable to those obtained using the GenLouvain code (which was not intended by its authors to be optimized for speed). The number of iterations for the Modluarity MBO scheme ranges approximately from 15 to 35 for $\hat{n}\in\{2,3,\ldots,10\}$.


\begin{table}[h!]
\centering
\begin{tabular}{| c | c | c |c|c|c| }
  \hline                        
   & $N_c$ &$Q$&NMI&Purity&Time (seconds) \\
 \hline 
  GenLouvain& 2&0.9305&0.85&0.975&110 s\\
   \hline
   Modularity MBO ($\hat{n}=2$)&2&0.9316&0.85&0.977&11 s\\
   \hline
   Multi-$\hat{n}$ MM ($\hat{n}\in \{2,3,\ldots,10\}$)&2&0.9316&0.85&0.977&25 s\\
   \hline
   Spectral Clustering ($k$-Means)&2&NA&0.003&0.534&1.5 s \\
   \hline
  \end{tabular}
\caption{}
\label{table mnist49}
\end{table}

The {\em purity} score, which we also report in Table \ref{table mnist49}, measures the extent to which a network partition matches ground truth.  Suppose that an $N$-node network has a partition $C=\{C_1,C_2,\ldots,C_K\}$ into non-overlapping communities and that the ground-truth partition is $\hat{C}=\{\hat{C}_1,\hat{C}_2,\ldots,\hat{C}_{\hat{K}}\}$. The purity of the partition $C$ is then defined as
\begin{equation}	
	\mathrm{Prt}(C,\hat{C})=\frac{1}{N}\sum_{k=1}^K \mathrm{max}_{l\in \{1,\ldots,\hat{K}\}} |C_k \cap \hat{C}_l| \in [0,1]\,.
\end{equation}	
Intuitively, purity can by viewed as the fraction of nodes that have been assigned to the correct community. However, the purity score is not robust in estimating the performance of a partition. When the partition $C$ breaks the network into communities that consist of single nodes, then the purity score achieves a value of $1$. hence, one needs to consider other diagnostics when interpreting the purity score. In this particular data set, a high purity score does indicate good performance because the ground truth and the partitions each consist of two communities.


Observe in Table \ref{table mnist49} that all modularity-based algorithms identified the correct community assignments for more than 97\% of the nodes, whereas standard spectral clustering was only correct for just over half of the nodes. The Multi-$\hat{n}$ MM scheme takes only 25 seconds.  If one specifies $\hat{n}=2$, then the Modularity MBO scheme only takes 11 seconds.  


\subsubsection{MNIST 70k Network}

We test our new schemes further by consider the entire MNIST network of 70,000 samples containing digits from ``0" to ``9". This network contains about five times as many nodes as the MNIST ``4-9" network. However, the node strengths in the two networks are very similar because of how we construct the weighted adjacency matrix.  We thus choose $\gamma=0.5$ so that the modularity optimization is performed at a similar resolution scale in both networks. There are 1001664 weighted edges in this network.

We implement the Multi-$\hat{n}$ MM scheme with $N_{\mathrm{eig}}=100$ and the search range $\hat{n}\in\{2,3,\ldots,20\}$. Even if $N_c$ is the number of communities in the true optimal solution, the input $\hat{n}=N_c$ might not give a partition with $N_c$ groups. The modularity landscape in real networks is notorious for containing a huge number of nearly degenerate local optima (especially for values of modularity $Q$ near the globally optimum value) \cite{good2010}, so we expect the algorithm to yield a local minimum solution rather than a global minimum. Consequently, it is preferable to extend the search range to $\hat{n} > N_c$, so that the larger $\hat{n}$ gives more flexibility to the algorithm to try to find the partition that optimizes modularity.


The best partition that we obtained using the search range $\hat{n}\in\{2,3,\ldots,20\}$ contains 11 communities. All of the digit groups in the ground truth except for the ``1''-group are correctly matched to those communities. In the partition, the ``1''-group splits into two parts, which is unsurprising given the structure of the data.  In particular, the samples of the digit ``1" include numerous examples that are written like a ``7''. This set of samples are thus easily disconnected from the rest of ``1''-group. If one considers these two parts as one community associated with ``1''-group, then the partition achieves a 96\% correctness in its classification of the digits.

As we illustrate in Table \ref{table mnist70k}, the GenLouvain code yields comparably successful partitions as those that we obtained using the Multi-$\hat{n}$ MM scheme.  
By comparing the running time of the Multi-$\hat{n}$ MM scheme on both MNIST networks, one can see that our algorithm scales well in terms of speed when the network size increases. While the network size increases five times ($5\times$) and the search range gets doubled ($2\times$), the computational time increases by a factor of $11.6\approx 5\times2$.


The number of iterations for the Modluarity MBO scheme ranges approximately from 35 to 100 for $\hat{n}\in\{2,3,\ldots,20\}$. Empirically, even though the total number of iterations can be as large as over a hundred, the modularity score quickly gets very close to its final value within the first 20 iteration. 



The computational cost of the Multi-$\hat{n}$ MM scheme consists of two parts: the calculation of the eigenvectors and the MBO iteration steps. Because of the size of the MNIST 70k network, the first part costs about 90 seconds in {\sc Matlab}. However, one can incorporate a faster eigenvector solver, such as the Rayleigh-Chebyshev (RC) procedure of \cite{RC}, to improve the computation speed of an eigen-decomposition. This solver is especially fast for producing a small portion (in this case, ${1}/{700}$) of the leading eigenvectors for a sparse symmetric matrix. Upon implementing the RC procedure in C++ code, it only takes 12 seconds to compute the 100 leading eigenvector-eigenvalue pairs. Once the eigenvectors are calculated, they can be reused in the MBO steps for multiple values of $\hat{n}$ and different initial functions $f^0$.  This allows good scalability, which is a particularly nice feature of using this MBO scheme.

\begin{table}[h!]
\centering
\begin{tabular}{ |c | c | c |c|c|c |}
  \hline                        
   & Nc &Q&NMI&Purity&Time (second) \\
 \hline 
  GenLouvain& 11&0.93&0.916&0.97&10900 s\\
   \hline
   Multi-$\hat{n}$ MM ($\hat{n}\in \{2,3,\ldots,20\}$)&11&0.93&0.893&0.96&290 s / 212 s*\\
      \hline
Modularity MBO 3\% GT ($\hat{n}=10$)&10&0.92&0.95&0.96&94.5 s / 16.5 s*\\
\hline
  \end{tabular}
\small{\emph{$^*$Calculated with the RC procedure.}}\\
\caption{}\label{table mnist70k}
\end{table}

Another benefit of the Modularity MBO scheme is that it allows the possibility of incorporating a small portion of the ground truth in the modularity optimization process.  
In the present paper, we implement the Modularity MBO using 3\% of the ground truth by specifying the true community assignments of 2100 nodes, which we chose uniformly at random in the initial function $f^0$. We also let $\hat{n}=10$. With the eigenvectors already computed (which took 12 seconds using the RC process), the MBO steps take a subsequent 4.5 seconds to yield a partition with exactly 10 communities and 96.4\% of the nodes classified into the correct groups. The authors of Ref.~\cite{MBO-multiGL} also implemented a segmentation algorithm on this MNIST 70k data with 3\% of the ground truth, and they obtained a partition with a correctness 96.9\% in 15.4 seconds. In their algorithm, the ground truth was enforced by adding a quadratic fidelity term to the energy functional (semi-supervised). The fidelity term is the $\ell_2$ distance of the unknown function $f$ and the given ground truth. In our scheme, however, it is only used in the initial function $f^0$. Nevertheless, it is also possible to add a fidelity term to the Modularity MBO scheme and thereby perform semi-supervised clustering.


\subsection{Network-Science Coauthorships}

Another well-known graph in the community detection literature is the network of coauthorships of network scientists.  This benchmark was compiled by Mark Newman and first used in Ref.~\cite{Newmanspectral}.  

In the present paper, we use the graph's largest connected component, which consists of 379 nodes representing authors and 914 weighted edges that indicate coauthored papers.  We do not have any so-called ground truth for this network, but it is useful to compare partitions obtained from our algorithm with those obtained using more established algorithms. In this section, we use GenLouvain's result as this pseudo-ground truth.  In addition to Modularity-MBO, RMM, and GenLouvain, we also consider the results of modularity-based spectral partitioning methods that allow the option of either bipartitioning or tripartitioning at each recursive stage \cite{Newmanspectral,tripart}..

In Ref.~\cite{Newmanspectral}, Newman proposed a spectral partitioning scheme for modularity optimization by using the leading eigenvectors (associated with the largest eigenvalues) of a so-called {\em modularity matrix} $\textbf{B}=\textbf{W}-\textbf{P}$ to approximate the modularity function $Q$. In the modularity matrix, $\textbf{P}$ is the probability null model and $P_{ij}=\frac{k_i k_j}{2m}$ is the NG null model with $\gamma=1$. Assume that one uses the first $p$ leading eigenvectors $\{\textbf{u}_1, \textbf{u}_2, \ldots, \textbf{u}_p\}$, and let $\beta_j$ denote the eigenvalue of $\textbf{u}_j$ and $\textbf{U}=(\textbf{u}_1|\textbf{u}_2|\ldots | \textbf{u}_p)$. We then define $N$ node vectors $\textbf{r}_i\in \mathbb{R}^p$ whose $j$th component is 
 \begin{equation*}	
	(\textbf{r}_i)_j = \sqrt{\beta_j-\alpha}U_{ij}\,,
\end{equation*}	
where $\alpha \leq \beta_p$ and $j\in\{1,2,\ldots,p\}$. The modularity $Q$ is therefore approximated as 
 \begin{align}
	 Q\simeq \hat{Q}= N\alpha + \sum_{l=1}^{\hat{n}}\|\textbf{R}_l\|_{\ell_2}^2\,, \label{approxQ}
 \end{align}
 where $\textbf{R}_l=\sum_{g_i=l}\textbf{r}_i$ is sum of all node vectors in the $l$th community (where $l\in \{1,2,\ldots,\hat{n}\}$).

A partition that maximize (\ref{approxQ}) in a given step must satisfy the geometric constraints $\textbf{R}_l \cdot \textbf{r}_i > 0$, $g_i=l$, and $\textbf{R}_l \cdot \textbf{R}_h < 0$ for all $l,h \in \{1,2,\ldots, \hat{n}\}$. Hence, if one constructs an approximation $\hat{Q}$ using $p$ eigenvectors, a network component can be split into at most $p+1$ groups in a given recursive step. The choice $p = 2$ allows either bipartitioning or tripartitioning in each recursive step. Reference \cite{Newmanspectral} discussed the case of general $p$ but reported results for recursive bipartitioning with $p = 1$.
Reference \cite{tripart} implemented this spectral method with $p = 2$ and a choice of bipartitioning or tripartioning at each recursive step.


In Table \ref{table scinet}, we report diagnostics for partitions obtained by several algorithms (for $\gamma=1$).  For the recursive spectral bipartitioning and tripartitioning, we use {\sc Matlab} code that has been provided by the authors of Ref.~\cite{tripart}.  They informed us that this particular implementation was not optimized for speed,
so we expect it to be slow. One can create much faster implementations of the same spectral method. The utility of this method for the present comparison is that Ref.~\cite{tripart} includes a detailed discussion of its application to the network of network scientists.  Each partitioning step in this spectral scheme either bipartitions or tripartitions a group of nodes.  Moreover, as discussed in Ref.~\cite{tripart}, 
a single step of the spectral tripartitioning is by itself interesting. 
Hence, we specify $\hat{n}=3$ for the Modularity MBO scheme as a comparison.

\begin{table}[h!]
\centering
\begin{tabular}{ |c | c | c |c|c|c |}
  \hline                        
   & $N_c$ &Q&NMI&Purity&Time (seconds)\\
 \hline 
  GenLouvain& 19&0.8500&1&1&0.5 s\\
   \hline
  Spectral Recursion& 39 & 0.8032&0.8935&0.9525&60 s \\
  \hline 
   RMM &23&0.8344&0.9169&0.9367&0.8 s\\
   \hline
   Tripartition&3&0.5928&0.3993&0.8470&50 s\\
   \hline
   Modularity MBO&3&0.6165&0.5430&0.9974&0.4 s\\
   \hline
  \end{tabular}
  \caption{}\label{table scinet}
\end{table}

From Table \ref{table scinet}, we see that the Modularity MBO scheme with $\hat{n}=3$ gives a higher modularity than a single tripartition, and the former's NMI and purity are both significantly higher. When we do not specify the number of clusters, the RMM scheme achieves a higher modularity score and NMI than recursive bipartitioning/tripartitioning, though the former's purity is lower (which is not surprising due to its larger $N_c$). The RMM scheme and GenLouvain have similar run times.
For any of these methods, one can of course use subsequent post-processing, such as Kernighan-Lin node-swapping steps \cite{Newmanspectral, tripart,MasonNotice}, to find higher-modularity partitions. 




\section{Conclusion and Discussion}\label{conc}

In summary, we have presented a novel perspective on the problem of modularity optimization by reformulating it as a minimization of an energy functional involving the total variation on a graph.  This provides an interesting bridge between the network science and compressive sensing communities, and it allows the use of techniques from compressive sensing and image processing to tackle modularity optimization.  In this paper, we have proposed MBO schemes that can handle large data at very low computational cost.  Our algorithms produce competitive results compared to existing methods, and they scale well in terms of speed for certain networks (such as the MNIST data). 
In our algorithms, after computing the eigenvectors of the graph Laplacian, the time complexity of each MBO iteration step is $O(N)$. 

One major part of our schemes is to calculate the leading eigenvector-eigenvalue pairs, so one can benefit from the fast numerical Rayleigh-Chebyshev procedure in Ref.~\cite{RC} when dealing with large, sparse networks. Furthermore, for a given network (which is represented by a weighted adjacency matrix), one can reuse previously computed eigen-decompositions for different choices of initial functions, different values of $\hat{n}$, and different values of the resolution parameter $\gamma$.  This provides welcome flexibility, and it can be used to significantly reduce computation time because the MBO step is extremely fast, as each step is $O(N)$ and the number of iterations is empirically small.


Importantly, our reformulation of modularity also provides the possibility to incorporate partial ground truth.  This can accomplished either by feeding the information into the initial function or by adding a fidelity term into the functional. (We only pursued the former approach in this paper.) It is not obvious how to incorporate partial ground truth using previous optimization methods.  This ability to use our method either for unsupervised or for semi-supervised clustering is a significant boon.




\section*{Acknowledgements}

We thank Marya Bazzi, Yves van Gennip, Blake Hunter, Ekaterina Merkurjev, and Peter Mucha for useful discussions.  We also thank Peter Mucha for providing his spectral partitioning code.  We have included acknowledgements for data directly in the text and the associated references.




\appendix
\section*{Appendix}
The notion of $\Gamma$-convergence of functionals is now commonly used for minimization problems. See Ref.~\cite{gamma-intro} for detailed introduction.  In this appendix, we briefly review the definition of $\Gamma$-convergence and then prove the claim that the graphical multi-phase Ginzburg-Landau functional $\Gamma$-converges to the graph TV. This proof is a straightforward extension of the work in Ref.~\cite{Yves} for the two-phase graph GL functional.


\begin{definition}
Let $X$ be a metric space and let $\left\{F_n : X\rightarrow \mathbb{R}\cup\{\pm \infty\}\right\}^{\infty}_{n=1}$ be a sequence of functionals. The sequence $F_n$ {\tt $\Gamma$-converges} to the functional $F:X\rightarrow R\cup\{\pm \infty\}$ if, for all $f\in X$, the following lower and upper bound conditions hold:

\vspace{.2 in}
\begin{description}
\item[(lower bound condition)] for every sequence $\{f_n\}_{n=1}^{\infty}$ such that $f_n\rightarrow f$, we have \[F(f)\leq \liminf_{n\rightarrow \infty}F_n(f_n)\,;\] 

\item[(upper bound condition)] there exists a sequence $\{f_n\}_{n=1}^{\infty}$ such that \[F(f)\geq \limsup_{n\rightarrow \infty}F_n(f_n)\,.\]
\end{description}
\end{definition}


Reference \cite{MBO-multiGL} proposed the following multi-phase graph GL functional: 
\begin{align*}
	GL^{\mathrm{multi}}_{\epsilon}(\hat{f})=\frac{1}{2}\sum_{l=1}^{\hat{n}}\langle \hat{f}^{(l)},\textbf{L}\hat{f}^{(l)} \rangle +\frac{1}{\epsilon^2}\sum_{i=1}^N W_{\mathrm{multi}}(\hat{f}(n_i))
\end{align*}
where $\hat{f}: G\rightarrow \mathbb{R}^{\hat{n}}$ and $W_{\mathrm{multi}}(\hat{f}(n_i))=\prod_{l=1}^{\hat{n}}\|\hat{f}(n_i)-\vec{e}_l\|^2_{\ell_1}$.  See Sections \ref{meth} and \ref{section algorithm} for the definitions of all of the relevant graph notation. Let $X=\{\hat{f}\;| \; \hat{f}:G\rightarrow \mathbb{R}^{\hat{n}}\}$, $X^p=\{f \; | \;f:G \to V^{\hat{n}} \}\subset X$, and $F_{\epsilon}=GL^{\mathrm{multi}}_{\epsilon}$ for all $\epsilon>0$. Because $\hat{f}$ can be viewed as a matrix in $\mathbb{R}^{N\times \hat{n}}$, the metric for space $X$ can be defined naturally using the $\ell_2$ norm. 
\begin{theorem}\label{gamma convergence}
({\tt $\Gamma$-convergence}). The sequence $F_{\epsilon}$ $\Gamma$-converges to $F_0$ as $\epsilon \rightarrow 0^+$, where 
\begin{align*}
	F_0(\hat{f}):=
\begin{cases}
	|\hat{f}|_{TV}=\frac{1}{2}\sum_{i,j=1}^N w_{ij}\|\hat{f}(n_i)-\hat{f}(n_j)\|_{\ell_1}\,, \quad &\mathrm{if}~\hat{f}\in X^p\,,\\
	+\infty\,, \quad &\mathrm{otherwise}\,.
\end{cases}
\end{align*}
\end{theorem}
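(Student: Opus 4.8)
The plan is to verify the two defining inequalities of $\Gamma$-convergence directly, exploiting the fact that on a finite graph the ambient space $X \cong \mathbb{R}^{N \times \hat n}$ is finite-dimensional and the target set $X^p = \{f : G \to V^{\hat n}\}$ is a \emph{finite}, hence closed and discrete, collection of points. Before treating the two bounds I would record two elementary facts about the pieces of $F_\epsilon = GL^{\mathrm{multi}}_\epsilon$. First, the multi-well term $\sum_i W_{\mathrm{multi}}(\hat f(n_i))$ is continuous, nonnegative, and vanishes exactly on $X^p$, since $W_{\mathrm{multi}}(x) = \prod_l \|x - \vec e_l\|^2_{\ell_1}$ is zero iff $x$ is one of the basis vectors; only these properties of $W_{\mathrm{multi}}$ will be used. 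Second, the diffusion term $D(\hat f) := \frac12 \sum_l \langle \hat f^{(l)}, \mathbf L \hat f^{(l)}\rangle = \frac14 \sum_{i,j} w_{ij}\|\hat f(n_i) - \hat f(n_j)\|^2_{\ell_2}$ is a continuous (quadratic) function of $\hat f$ on all of $X$, and on $X^p$ it reduces to the graph total variation: for $\hat f \in X^p$ every component difference $\hat f^{(l)}_i - \hat f^{(l)}_j$ lies in $\{-1,0,1\}$, so squaring agrees with absolute value and $D(\hat f)$ collapses to $|\hat f|_{TV}$ (up to the fixed multiplicative normalization in the definition of $F_\epsilon$). This $\ell_2^2 = \ell_1$ identity on $X^p$ is the algebraic heart of the theorem.

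For the lower-bound (liminf) inequality I would take an arbitrary sequence $\hat f_\epsilon \to \hat f$ and split on whether $\hat f \in X^p$. If $\hat f \notin X^p$, then $W_{\mathrm{multi}}(\hat f(n_i)) > 0$ for some node $n_i$, so by continuity $W_{\mathrm{multi}}(\hat f_\epsilon(n_i))$ stays bounded below by a positive constant for small $\epsilon$; multiplying by $\epsilon^{-2}$ forces $F_\epsilon(\hat f_\epsilon) \to +\infty = F_0(\hat f)$. If instead $\hat f \in X^p$, then discarding the nonnegative potential and using continuity of $D$ gives $\liminf_\epsilon F_\epsilon(\hat f_\epsilon) \ge \lim_\epsilon D(\hat f_\epsilon) = D(\hat f) = F_0(\hat f)$. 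For the upper-bound (recovery) inequality I would simply take the constant sequence $\hat f_\epsilon \equiv \hat f$: when $\hat f \in X^p$ the potential vanishes identically, so $F_\epsilon(\hat f) = D(\hat f) = F_0(\hat f)$ for every $\epsilon$ and the $\limsup$ matches; when $\hat f \notin X^p$ the bound $F_0(\hat f) = +\infty$ is vacuous. Combining the two bounds yields the claimed $\Gamma$-limit, and Theorem \ref{theorem relaxation} then follows by adding back the continuous term $\gamma\|f - \mathrm{mean}(f)\|^2_{\ell_2}$.

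The step I expect to be the conceptual crux is recognizing why the recovery sequence may be taken \emph{constant}, i.e.\ why no interfacial ``surface-tension'' term arises in the $\Gamma$-limit as it would in the classical continuum Modica--Mortola theorem. The reason is precisely the discreteness of the graph together with the chosen scaling (no $\epsilon$ weight on $D$ and an $\epsilon^{-2}$ weight on the potential): there is no continuum across which an optimal transition profile could be inserted, the wells $V^{\hat n}$ are isolated points of $X$, and bounded-energy sequences are pinned arbitrarily close to a single partition function. Consequently the sharp liminf estimate and the construction of a recovery sequence—the two genuinely hard ingredients of a continuum $\Gamma$-convergence proof—both become routine, and the only real work is the case bookkeeping above together with the $\ell_2^2 = \ell_1$ identity on $X^p$. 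I would follow the two-phase argument of Ref.~\cite{Yves}, checking that each step passes verbatim from scalar functions $G \to \mathbb{R}$ to the vector-valued setting $G \to \mathbb{R}^{\hat n}$.
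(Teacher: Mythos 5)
Your proof is correct and takes essentially the same approach as the paper's: the paper likewise splits $F_\epsilon$ into the continuous diffusion part $Z(f)=\frac{1}{2}\sum_{l}\langle f^{(l)},\textbf{L}f^{(l)}\rangle$ and the singular potential part $W_\epsilon$, proves the lower bound by the same dichotomy (blow-up at nodes where $f\notin V^{\hat n}$, nonnegativity plus continuity otherwise), uses constant recovery sequences, and finishes by identifying $Z$ with the TV on $X^p$; the only difference is bookkeeping, since the paper first establishes $\Gamma$-convergence of $W_\epsilon$ alone and then adds $Z$ as a continuous perturbation, whereas you verify both inequalities for the full functional directly. Incidentally, your parenthetical hedge about the multiplicative normalization is well taken: with the paper's stated constants one actually gets $Z=\tfrac{1}{2}|\cdot|_{TV}$ on $X^p$ (the $\ell_2^2=\ell_1$ identity holds componentwise, but the extra prefactor $\tfrac{1}{2}$ in the GL functional survives), a factor the paper's own proof glosses over when it asserts $Z(f)=|f|_{TV}$.
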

\begin{proof}
Consider the functional $W_{\epsilon}(f)=\frac{1}{\epsilon^2} \sum_{i=1}^N W_{\mathrm{multi}}(f(n_i))$ and 
\begin{align*}
	W_0(f):=
\begin{cases}
	0\,, \quad &\mathrm{if}~f\in X^p\,,\\
+\infty\,, \quad &\mathrm{otherwise}\,.
\end{cases}
\end{align*}


First, we show that $W_{\epsilon}$ $\Gamma$-converges to $W_0$ as $\epsilon \rightarrow 0^+$. Let $\{\epsilon_n\}_{n=1}^{\infty}\subset (0,\infty)$ be a sequence such that $\epsilon_n \rightarrow0$ as $n\rightarrow \infty$. For the lower bound condition, suppose that a sequence $\{f_n\}_{n=1}^{\infty}$ satisfies $f_n\rightarrow f$ as $n\rightarrow \infty$. If $f\in X^p$, then it follows that $W_0(f)=0\leq \liminf_{n\rightarrow \infty}W_{\epsilon_n}(f_n)$ because $W_{\epsilon}\geq 0$. If $f$ does not belong to $X^p$, then  there exists $i\in \{1,2,\ldots,N\}$ such that $f(n_i)\not\in V^{\hat{n}}$ and $f_n(n_i) \rightarrow f(n_i)$. Therefore, $\liminf_{n\rightarrow \infty}W_{\epsilon_n}(f_n)=+\infty  \geq W_0(f) = +\infty$. For the upper bound condition, assume that $f\in X^p$ and $f_n=f$ for all $n$. It then follows that $W_0(f)=0\geq  \limsup_{n\rightarrow \infty}W_{\epsilon_n}(f_n)=0$. Thus, $W_{\epsilon}$ $\Gamma$-converges to $W_0$.

Because $Z(f):=\frac{1}{2}\sum_{l=1}^{\hat{n}}\langle f^{(l)},\textbf{L}f^{(l)} \rangle$ is continuous on the metric space $X$,  it is straightforward to check that the functional $F_{\epsilon_n}=Z+W_{\epsilon_n}$
satisfies the lower and upper bound condition and therefore $\Gamma$-converges to $Z+W_0$.
 
 Finally, note that $Z(f)=|f|_{TV}$ for all $f \in X^p$. Therefore, $Z+W_0=F_0$ and one can conclude that $F_{\epsilon_n}$ $\Gamma$-converges to $F_0$ for any sequence $\epsilon_n \rightarrow0^+$.
\end{proof}



\begin{thebibliography}{10} 
\bibitem{RC}{\sc C. Anderson},
{\em A Rayleigh-Chebyshev procedure for finding the smallest eigenvalues and associated eigenvectors of large sparse Hermitian matrices},
Journal of Computational Physics, 229 (2010), pp. 7477--7487.

\bibitem{MBOconverge}{\sc G. Barles, and C. Georgelin},
{\em A simple proof of convergence for an approximation scheme for computing motions by mean curvature},
SIAM Journal on Numerical Analysis, 32(2) (1995), pp. 484--500.

\bibitem{belkin}{\sc M. Belkin, and P. Niyogi},
{\em Laplacian eigenmaps for dimensionality reduction and data representation},
Neural Computation, 15(6) (2003), pp. 1373--1396.

\bibitem{flenner}{\sc A.~L. Bertozzi, and A. Flenner},
{\em Diffuse interface models on graphs for classification of high dimensional data},
Multiscale Modeling \& Simulation, 10(3) (2012), pp. 1090--1118.


\bibitem{Blondel}{\sc V.~D. Blondel, J.-L. Guillaume, R. Lambiotte, and E. Lefebvre},
{\em Fast unfolding of communities in large networks},
Journal of Statistical Mechanics: Theory and Experiment, 10 (2008), p. P10008.

\bibitem{EO1}{\sc S. Boettcher, and A.~G. Percus}, 
{\em Optimization with extremal dynamics}, 
Complexity, 8 (2002), pp.57--62.

\bibitem{NPhard}{\sc U. Brandes, D. Delling, M. Gaertler, R. G$\ddot{o}$rke, M. Hoefer, Z. Nikoloski, and D. Wagner},
{\em On modularity clustering},
IEEE Transactions on Knowledge and Data Engineering, 20(2) (2008), pp. 172--188.


\bibitem{BLUV12}{\sc X. Bresson, T. Laurent, D. Uminsky, and J. von Brech},
{\em Convergence and energy landscape for Cheeger cut clustering},
Advances in Neural Information Processing Systems (NIPS), (2012), pp. 1394--1402.

\bibitem{art:BressonTaiChanSzlam12TransLearn}{\sc X. Bresson, X.-C. Tai, T.~F. Chan, and A. Szlam},
{\em Multi-class transductive learning based on $\ell^1$ relaxations of Cheeger cut and Mumford-Shah-Potts model},
UCLA CAM Report, available at: \url{ftp://ftp.math.ucla.edu/pub/camreport/cam12-03.pdf}, (2012).







\bibitem{ChanVese}{\sc T.~F. Chan, and L.~A. Vese},
{\em Active contours without edges},
IEEE Transactions on Image Processing, 10(2) (2001), pp. 266--277.

\bibitem{fan}{\sc F.~R.~K. Chung},
{\em Spectral Graph Theory},
CBMS Regional Conference Series in Mathematics, 92 (1997).

\bibitem{Clauset04}{\sc A. Clauset, M.~E.~J. Newman, and C. Moore},
{\em Finding community structure in very large networks},
Physics Review E, 70(6) (2004), p. 066111.

\bibitem{DiffusionMap}{\sc R.~R. Coifman, and S. Lafon},
{\em Diffusion maps},
Applied and Computational Harmonic Analysis 21 (2006), pp. 5--30.

\bibitem{Cucuringu}{\sc M. Cucuringu, V.~D. Blondel, and P.~V. Dooren},
{\em Extracting spatial information from networks with low-order eigenvectors},
arXiv:1111.0920, (2011).


\bibitem{NMI}{\sc L. Danon, A. Diaz-Guilera, J. Duch, and A. Arenas},
{\em Comparing community structure identification},
Journal of Statistical Mechanics: Theory and Experiment, 9 (2005), p. P09008.

\bibitem{doreian} {\sc P. Doreian, V. Batagelj, and A. Ferligoj},
{\em Generalized Blockmodeling},
Cambridge University Press, (2004).

\bibitem{EO2}{\sc J. Duch, and A. Arenas}, 
{\em Community detection in complex networks using extremal optimization}, 
Physics Review E, 72(2) (2005), p. 027104.


\bibitem{Selim}{\sc S. Esedoglu, and F. Otto},
{\em Threshold dynamics for networks with arbitrary surface tensions},
submitted, available at: \url{http://www.mis.mpg.de/publications/preprints/2013/prepr2013-2.html}, (2013).

\bibitem{SelimTsai}{\sc S. Esedoglu, and Y.-H. Tsai},
{\em Threshold dynamics for the piecewise constant Mumford-Shah functional},
Journal of Computational Physics, 211(1) (2006), pp. 367--384.

\bibitem{MBOconverge2}{\sc L.~C. Evans},
{\em Convergence of an algorithm for mean curvature motion},
Indiana University Mathematics Journal, 42(2) (1993), pp. 533--557.

\bibitem{DEyre}{\sc D. Eyre},
{\em An unconditionally stable one-step scheme for gradient systems},
unpublished paper, available at: \url{www.math.utah.edu/$\sim$eyre/research/methods/stable.ps}, (1998).


\bibitem{Fortunato09}{\sc S. Fortunato},
{\em Community detection in graphs},
 Physics Reports, 486 (2010), pp. 75--174.


\bibitem{MBO-multiGL}{\sc C. Garcia-Cardona, E. Merkurjev, A.~L. Bertozzi, A. Flenner, and A. Percus},
{\em Fast multiclass segmentation using diffuse interface methods on graphs},
arXiv:1302.3913, (2013).

\bibitem{Yves}{\sc Y. van Gennip, and A.~L. Bertozzi},
{\em Gamma-convergence of graph Ginzburg-Landau functionals},
Advances in Differential Equations, 17 (2012), pp. 1115--1180.


\bibitem{GN02} {\sc M. Girvan, and M.~E.~J. Newman}, 
{\em Community structure in social and biological networks},
 Proceedings of the National Academy of Sciences, 99(12) (2002), pp. 7821--7826.
 
 \bibitem{good2010} {\sc B.~H. Good, Y.-A. de Montjoye, and A. Clauset},
{\em Performance of modularity maximization in practical contexts},
Physics Review E, 81(4) (2010), p. 046106.


 \bibitem{GuimeraBiology}{\sc R. Guimer\`a, and L.~A.~N. Amaral},
{\em Functional cartography of complex metabolic networks},
Nature, 433 (2005), pp. 895--900.


\bibitem{SA2}{\sc R. Guimer\`a, M. Sales-Pardo, and L.~A.~N. Amaral},
{\em Modularity from fluctuations in random graphs and complex networks},
Physics Review E, 70(2) (2004), p. 025101.

\bibitem{pro:HeinBuhler10OneSpec}{\sc M. Hein, and T. B\"{u}hler},
{\em An inverse power method for nonlinear eigenproblems with applications in 1-spectral clustering and sparse PCA},
In Advances in Neural Information Processing Systems (NIPS), (2010), pp. 847--855.

\bibitem{pro:HeinSetzer11TightCheeger}{\sc M. Hein, and S. Setzer},
{\em Beyond spectral clustering - tight relaxations of balanced graph cuts},
In Advances in Neural Information Processing Systems (NIPS), (2011). 



\bibitem{Netwiki} {\sc I.~S. Jutla, L.~G.~S. Jeub, and P.~J. Mucha}, 
{\em A generalized Louvain method for community detection implemented in MATLAB},
 available at:  \url{http://netwiki.amath.unc.edu/GenLouvain}, (2011--2012). 

\bibitem{SA1}{\sc S. Kirkpatrick, C.~D. Gelatt, and M.~P. Vecchi},
{\em Optimization by simulated annealing}, 
Science, 220(4598) (1983), pp. 671--680.


\bibitem{GL-TV}{\sc R.~V. Kohn, and P. Sternberg},
{\em Local minimizers and singular perturbations},
Proceedings of the Royal Society of Edinburgh: Section A Mathematics, 111 (1989), pp. 69--84.

\bibitem{LFR}{\sc A. Lancichinetti, S. Fortunato, and F. Radicchi},
{\em Benchmark graphs for testing community detection algorithms},
Physics Review E, 78(4) (2008), p. 046110.

\bibitem{LFRcomp}{\sc A. Lancichinetti, and S. Fortunato},
{\em Community detection algorithms: a comparative analysis},
Physics Review E, 80(5) (2009), p. 056117.

\bibitem{anna} {\sc A.~C.~F. Lewis, N.~S. Jones, M.~A. Porter, and C.~M. Deane},
 {\em The function of communities in protein interaction networks at multiple scales}, 
 BMC Systems Biology, 4 (2010), p. 100.
 
\bibitem{CS-MRI}{\sc M. Lustig, D.~L. Donoho, J.~M. Santos, and J.~M. Pauly},
{\em Compressed sensing MRI},
IEEE Signal Processing Magazine, 25(2) (2008), pp. 72--82.

\bibitem{tutorialSC}{\sc U. von Luxburg},
{\em A tutorial on spectral clustering},
Statistics and Computing, 17(4) (2007), pp. 395--416.


\bibitem{gamma-intro}{\sc G.~D. Maso},
{\em An introduction to $\Gamma$-convergence},
Progress in Nonlinear Differential Equations and Their Applications, 8 (1993).

\bibitem{MBO-tiana}{\sc E. Merkurjev, T. Kostic, and A.~L. Bertozzi},
{\em An MBO scheme on graphs for segmentation and image processing},
submitted, available at: \url{www.math.ucla.edu/~bertozzi/papers/MKB12.pdf}, (2013).

\bibitem{MBO}{\sc B. Merriman, J.~K. Bence, and S.~J. Osher},
{\em Motion of multiple junctions: a level set approach},
Journal of Computational Physics, 112 (2) (1994), pp. 334--363. 


\bibitem{Mumford}{\sc D. Mumford, and J. Shah},
{\em Optimal approximations by piecewise smooth functions and associated variational problems},
Communications on Pure and Applied Mathematics, 42 (1989), pp. 577--685.

\bibitem{Image-via-TV}{\sc D. Needell, and R. Ward},
{\em Stable image reconstruction using total variation minimization},
arXiv:1202.6429, (2012).

\bibitem{newman03}{\sc M.~E.~J. Newman, and M. Girvan},
 {\em Mixing patterns and community structure in networks},
Statistical Mechanics of Complex Networks, 625 (2003), pp. 66--87.

\bibitem{GN04}{\sc M.~E.~J. Newman, and M. Girvan},
{\em Finding and evaluating community structure in networks},
Physical Review E, 69(2) (2004), p. 026113.

\bibitem{newman04}{\sc M.~E.~J. Newman},
{\em Fast algorithm for detecting community structure in networks},
Physics Review E, 69(6) (2004), p. 066133.


\bibitem{Newmanspectral}{\sc M.~E.~J. Newman}, 
{\em Finding community structure in networks using the eigenvectors of matrices}, 
Physical Review E, 74(3) (2006), p. 036104. 

\bibitem{NewmanPhys08}{\sc M.~E.~J. Newman},
{\em The physics of networks},
Physics Today, 61(11) (2008), pp. 33--38. 

\bibitem{newman2010}{\sc M.~E.~J. Newman}, 
{\em Networks: An Introduction}, 
Oxford University Press, (2010).

\bibitem{WeissNg}{\sc A.~Y. Ng, M.~I. Jordan, and Y. Weiss},
{\em On spectral clustering: analysis and an algorithm},
Advances in Neural Information Processing Systems, (2001), pp. 849--856.










\bibitem{MasonNotice}{\sc M.~A. Porter, J.-P. Onnela, and P.~J. Mucha},
{\em Communities in networks},
Notices of the American Mathematical Society, 56(9) (2009), pp. 1082--1097, 1164--1166.



\bibitem{pro:Rang-Hein-constrained}{\sc S. Rangapuram, and M. Hein},
{\em Constrained 1-spectral clustering},
International conference on Artificial Intelligence and Statistics (AISTATS), (2012), pp. 1143--1151.

\bibitem{resolution}{\sc J. Reichardt, and S. Bornholdt},
{\em Statistical mechanics of community detection},
Physics Review E, 74(1) (2006), p. 016110.

\bibitem{tripart}{\sc T. Richardson, P.~J. Mucha, and M.~A. Porter}, 
{\em Spectral tripartitioning of networks}, 
Physics Review E, 80(3) (2009), p. 036111.


\bibitem{puck} {\sc M.~P. Rombach, M.~A. Porter, J.~H. Fowler, and P.~J. Mucha}, 
{\em Core-periphery structure in networks}, 
arXiv:1202.2684, (2012).

\bibitem{ROF}{\sc L. Rudin, S. Osher, and E. Fatemi},
{\em Nonlinear total variation noise removal algorithm}, 
Physics D, 60 (1992), pp. 259--268.


\bibitem{ShiMalik}{\sc J. Shi, and J. Malik},
{\em Normalized cuts and image segmentation},
IEEE Transactions on Pattern Analysis and Machine Intelligence, 22 (8) (2000), pp. 888--905.



\bibitem{SB09}{\sc A. Szlam, and X. Bresson},
{\em A total variation-based graph clustering algorithm for Cheeger ratio cuts},
Proceedings of the 27th International Conference on Machine Learning, (2010), pp. 1039--1046.

 \bibitem{MasonFacebook}{\sc A.~L. Traud, P.~J. Mucha, and M.~A. Porter},
{\em Social structure of Facebook networks},
Physica A, 391(16) (2012), pp. 4165--4180.

\bibitem{ConvexS2}{\sc B.~P. Vollmayr-Lee, and A.~D. Rutenberg},
{\em Fast and accurate coarsening simulation with an unconditionally stable time step},
Physics Review E, 68(6) (2003), p. 066703.

\bibitem{MasonCongress}{\sc Y. Zhang, A.~J. Friend, A.~.L. Traud, M.~A. Porter, J.~H.  Fowler, and P.~J.  Mucha},
{\em Community structure in Congressional cosponsorship networks},
Physica A, 387(7) (2008), pp. 1705--1712.


\bibitem{MNISTdata}{\em MNIST Database},
available at: \url{http://yann.lecun.com/exdb/mnist/}.


\end{thebibliography}
\end{document}